\def\IsLLNCS{} 

\ifdefined\IsLLNCS
\documentclass{llncs}
\else
\documentclass[11pt]{article}
\usepackage{fullpage}
\fi

\usepackage{amsfonts,amsmath,amssymb,boxedminipage,color,url,nccmath}
\usepackage[normalem]{ulem}

\usepackage[mathscr]{euscript}
\usepackage{scalerel}

\usepackage[bottom]{footmisc} 

\usepackage{booktabs}
\usepackage{pifont}

\usepackage[pageanchor=false,pdfstartview=FitH,colorlinks,linkcolor=blue,filecolor=blue,citecolor=blue,urlcolor=blue]{hyperref} 

\ifdefined\IsLLNCS

\else\fi
\usepackage{amsthm} 

\usepackage[T1]{fontenc}
\usepackage{lmodern}
\AtBeginDocument{%
  \DeclareFontShape{T1}{lmr}{m}{scit}{<->ssub*lmr/m/scsl}{}%
}

\usepackage{enumitem} 
\usepackage[hypcap=false]{caption}

\usepackage{aliascnt}
\usepackage[numbers]{natbib} 
\usepackage{cleveref}
\usepackage{xspace}
\usepackage{xstring}

\usepackage{subcaption}
\usepackage{graphicx}

\usepackage{tikz}
\usetikzlibrary{positioning}

\ifdefined\IsLLNCS
\else
\usepackage[nottoc]{tocbibind} 
\fi

\usepackage{mathtools}

\newcommand{\remove}[1]{}

\newcommand{\TLLNCS}[2]{\ifdefined\IsLLNCS#1\else #2 \fi}

\ifdefined\IsDraft
    \newcommand{\authnote}[2]{\textbf{[{\color{red} #1's Note:} {\color{blue} #2}]}}
		
		\newcommand{\deleted}[1]{{\color{blue} ~Deleted:~{\color{red} #1}}}
		\newcommand{\TODO}[1]{{\color{red} TODO:} {\color{blue} #1}}
\else
    \newcommand{\authnote}[2]{}
		
		\newcommand{\deleted}[1]{}
		\newcommand{\TODO}[1]{}
		\renewcommand{\sout}[1]{}
\fi

\newcommand{\sdotfill}{\textcolor[rgb]{0.8,.8,0.8}{\dotfill}} 

\newenvironment{algorithm}{\vspace{-0.8\topsep}\medskip\noindent\vspace{-0.8\topsep}\sdotfill\begin{algo}}{\vspace{-\topsep}\sdotfill\end{algo}}

\newcommand \mycaption {\small }     
\newcommand \mylabel {}

\ifdefined\IsLLNCS
    \newenvironment{nfbox}[3]{
    \renewcommand \mycaption {#1}
    \renewcommand \mylabel {#2}
    \begin{center}\small
    \begin{tabular}{|ll|}
    \hline
    \hspace{.3ex}
    \begin{minipage}{.97\linewidth}
         \vspace{0.5ex}
         #3}
         {\smallskip
         \captionof{figure}{\mycaption}
         \label{\mylabel}
     \end{minipage}
     &\hspace{.3ex} \\
     \hline
     \end{tabular}
     \end{center}    
    }
\else
    
\fi

\ifdefined\IsLLNCS

\else

\fi

\newcommand{\ie}  {i.e.,\xspace}
\newcommand{\eg}  {e.g.,\xspace}

\newcommand{\wrt} {with respect to\xspace}
\newcommand{\wlg} {without loss of generality\xspace}

\newcommand{\abs}[1]{\left\lvert #1 \right\rvert}
\newcommand{\sabs}[1]{\lvert #1 \rvert}

\newcommand{\iprod}[1]{\langle #1 \rangle}
\newcommand{\set}[1]{\left\{#1\right\}}
\newcommand{\sset}[1]{\{#1\}}
\newcommand{\paren}[1]{\left(#1\right)}
\newcommand{\sparen}[1]{(#1)}

\newcommand{\tth}{\ensuremath{t}\textsuperscript{th}\xspace}

\newcommand{\NN}{{\mathbb{N}}}

\newcommand{\zo}{\{0,1\}}
\newcommand{\zon}{{\zo^n}}
\newcommand{\zom}{{\zo^m}}

\newcommand{\zos}{{\zo^\ast}}
\newcommand{\from}{\leftarrow}

\newcommand{\poly}{\operatorname{poly}}

\renewcommand{\cref}{\Cref}

\ifdefined\IsLLNCS
\newaliascnt{claiml}{theorem}
\newtheorem{claiml}[claiml]{Claim}
\aliascntresetthe{claiml}
\renewenvironment{claim}{\begin{claiml}}{\end{claiml}}
\crefname{claiml}{Claim}{Claims}
\else

\newtheorem{theorem}{Theorem}[section]

\newaliascnt{lemma}{theorem}
\newtheorem{lemma}[lemma]{Lemma}
\aliascntresetthe{lemma}

\newaliascnt{claim}{theorem}
\newtheorem{claim}[claim]{Claim}
\aliascntresetthe{claim}

\newaliascnt{corollary}{theorem}

\aliascntresetthe{corollary}

\newaliascnt{proposition}{theorem}

\aliascntresetthe{proposition}

\newaliascnt{conjecture}{theorem}

\aliascntresetthe{conjecture}

\newaliascnt{definition}{theorem}
\newtheorem{definition}[definition]{Definition}
\aliascntresetthe{definition}

\newaliascnt{remark}{theorem}
\newtheorem{remark}[remark]{Remark}
\aliascntresetthe{remark}

\newaliascnt{example}{theorem}

\aliascntresetthe{example}
\fi

\crefname{theorem}{Theorem}{Theorems}
\crefname{lemma}{Lemma}{Lemmas}
\crefname{figure}{Figure}{Figures}
\crefname{claim}{Claim}{Claims}
\crefname{corollary}{Corollary}{Corollaries}
\crefname{proposition}{Proposition}{Propositions}
\crefname{conjecture}{Conjecture}{Conjectures}
\crefname{definition}{Definition}{Definitions}
\crefname{remark}{Remark}{Remarks}
\crefname{exmaple}{Example}{Examples}

\newaliascnt{proto}{theorem}

\newtheorem{proto}[proto]{Protocol}

\aliascntresetthe{proto}
\crefname{proto}{protocol}{protocols}

\newaliascnt{algo}{theorem}
\newtheorem{algo}[algo]{Algorithm}
\aliascntresetthe{algo}
\crefname{algo}{algorithm}{algorithms}

\newaliascnt{expr}{theorem}
\newtheorem{expr}[expr]{Experiment}
\aliascntresetthe{expr}
\crefname{experiment}{experiment}{experiments}

\newaliascnt{fact}{theorem}

\aliascntresetthe{fact}
\crefname{fact}{fact}{facts}

\newaliascnt{observation}{theorem}

\aliascntresetthe{observation}
\crefname{observation}{observation}{observations}

\def\FullBox{$\Box$}
\def\qed{\ifmmode\qquad\FullBox\else{\unskip\nobreak\hfil
\penalty50\hskip1em\null\nobreak\hfil\FullBox
\parfillskip=0pt\finalhyphendemerits=0\endgraf}\fi}

\TLLNCS{
\newenvironment{proofof}[1]{\begin{proof}[Proof of~#1]}{\end{proof}}
}{
}

\newcommand{\Ex}{{\mathrm E}}
\newcommand{\eex}[2]{\Ex_{#1}\left[#2\right]}
\newcommand{\ex}[1]{\Ex\left[#1\right]}
\newcommand{\seex}[2]{\Ex_{#1}[#2]}

\renewcommand{\Pr}{{\mathrm {Pr}}}
\newcommand{\pr}[1]{\Pr\left[#1\right]}
\newcommand{\ppr}[2]{\Pr_{#1}\left[#2\right]}

\newcommand{\ppt}{{\sc ppt}\xspace}

\newcommand{\cS}{\mathcal{S}}

\newcommand{\cI}{\mathcal{I}}

\newcommand{\cG}{\mathcal{G}}
\newcommand{\cF}{\mathcal{F}}

\newcommand{\conc}{||}

\newcommand{\Tableofcontents}{
\ifdefined\IsLLNCS \else
    \thispagestyle{empty}
    \pagenumbering{gobble}
    \clearpage
    \setcounter{tocdepth}{2}
    \tableofcontents
    \thispagestyle{empty}
    \clearpage
    \pagenumbering{arabic}
\fi
}

\newcommand{\vect}[1]{\myvec{#1}}

\newcommand{\vs}{\vect{s}}

\newcommand{\vy}{\vect{y}}

\newcommand{\adv}{\mathcal{A}}

\newcommand{\secParam}{n}

\mathchardef\mhyphen="2D

\newcommand{\su}{\subseteq}

\newcommand{\myvec}[1]{{\bf #1}}

\newcommand{\SD}{\operatorname{SD}}
\newcommand{\of}[1]{\paren{#1}}

\newcommand{\eps}{\varepsilon}

\newcommand{\Break}{\adv}
\newcommand{\keylen}{\lambda}
\renewcommand{\stretch}{r}
\newcommand{\inlen}{\mathsf{in}}
\newcommand{\outlen}{\mathsf{out}}
\newcommand{\numquery}{c}
\newcommand{\numinputquery}{m}
\newcommand{\numcallsdist}{d}
\newcommand{\oracle}{\mathcal{O}}

\newcommand{\iscore}{\mathsf{is}}
\newcommand{\rscore}{\mathsf{rs}}

\newcommand{\vz}{\myvec{z}}

\newcommand{\keyDom}{\zo^{\keylen}}

\newcommand{\iprob}{\mathsf{ip}}
\newcommand{\rprob}{\mathsf{rp}}

\newcommand{\funcSpace}{\cF}

\newcommand{\prf}{F}

\newcommand{\real}{\mathsf{real}}
\newcommand{\ideal}{\mathsf{ideal}}
\newcommand{\tilR}{\widetilde{R}}

\newcommand{\lastset}{\cI}
\newcommand{\recSteps}{\ell}

\newcommand{\chall}{y^*}
\newcommand{\st}{\mathsf{st}}

\newcommand{\Adv}{\operatorname{Adv}}

\newcommand{\true}{\mathsf{true}}

\newcommand{\bnote}[1]{\authnote{Bar}{#1}}

\newcommand{\Bnote}[1]{\bnote}

\title{Lower Bounds on Black-Box Constructions \ifdefined\IsLLNCS\else\\\fi of Pseudorandom Functions}
\ifdefined\IsAnon
    \author{}
    \date{}
    \ifdefined\IsLLNCS\institute{}\fi
\else
	\ifdefined\IsLLNCS
    \author{
        Bar Alon\inst{1}\thanks{Work was conducted while at Georgetown University.}
        \and Itai Dinur\inst{2,3}
        \and Muthuramakrishnan Venkitasubramaniam\inst{3}\\
        \email{alonbar08@gmail.com, dinuri@bgu.ac.il, mv783@georgetown.edu}
    }
    \institute{School of Electrical Engineering, Tel-Aviv University. \and Department of Computer Science, Georgetown University \and Department of Computer Science, Ben-Gurion University}
	\else
	\author{
        Bar Alon\thanks{School of Electrical Engineering, Tel-Aviv University. Work was conducted while at Georgetown University.}\\\texttt{alonbar08@gmail.com}
        \and Itai Dinur\thanks{Department of Computer Science, Ben-Gurion University and Department of Computer Science, Georgetown University}\\\texttt{dinuri@bgu.ac.il}
        \and Muthuramakrishnan Venkitasubramaniam\thanks{Department of Computer Science, Georgetown University}\\texttt{mv783@georgetown.edu}
    }
	\fi
\fi

\begin{document}

\maketitle

\begin{abstract}
In their seminal work, \citeauthor*{GGM86} [CRYPTO 1984] constructed a pseudorandom function (PRF) using a black-box access to a pseudorandom generator (PRG). When combined with Levin's domain extension technique, the GGM construction invokes the PRG $\omega(\log n)$ times, where $n$ denotes the input length to the PRG. To this day, no black-box construction achieving fewer calls is known.

Recently, \citeauthor*{BMM24} [CRYPTO 2024] showed that for a certain family of constructions, which they termed \emph{tree constructions}, the GGM construction is optimal. However, the basic challenge of whether a PRF can be built with just \emph{one invocation} of the PRG still remains open.

In this work, we consider fully black-box constructions of PRFs from PRGs, where both the construction and the reduction are required to be black-box, and the number of interactions the reduction makes with the adversary is independent of the number of oracle calls the adversary makes to its underlying function within each interaction. 
Our main result shows that no such construction can have $o(n/\log n)$ and $o(\inlen/\log\inlen)$ \emph{non-adaptive} calls to the PRG, where $\inlen$ is the input length of the PRF. 
This impossibility holds even for weak PRFs with one-bit output, where the adversary is restricted to making i.i.d. uniformly random queries. In addition, we prove a lower bound for weak PRFs with sufficiently long outputs that holds even when the construction is allowed to make adaptive queries to the PRG.
\end{abstract}

\Tableofcontents

\section{Introduction}
A pseudorandom function (PRF) family is a collection of functions that appear random to any efficient algorithm. PRFs are fundamental to cryptography, both in theory and in practice, and have found many applications in areas outside of cryptography, such as in complexity and learning theory. 

The seminal work of \citet*{GGM86} (henceforth the GGM construction) shows that a PRF can be constructed from \emph{any} pseudorandom generator -- an efficient algorithm that on a random input, outputs a longer string that appears random to any efficient algorithm. Specifically, they showed that given black-box access to a pseudorandom generator (PRG) $G:\zon\to\zo^{2n}$, one can construct a PRF $\prf=\set{f_k:\zo^{\inlen(n)}\to\zon}_{k\in\zon}$ for any polynomial $\inlen(n)$. The number of calls to $G$ in the GGM construction is $|x|$ for any input $x$. A variant of the GGM construction that relies on Levin’s domain extension technique \cite{Levin85} improves the number of calls to $\omega(\log n)$.\footnote{The final construction adds to the key a hash  $h:\zo^{\inlen(n)}\to\zo^{\omega(\log n)}$ from pair-wise independent hash function family, and applies the GGM construction to $h(x)$.} However, a fundamental question has remained open since it was first discovered---is the construction optimal in the number of PRG invocations? More precisely, the following remains open:
\begin{quote}
	\centering\emph{What is the minimal number of invocations of a (length-doubling) PRG required to build a PRF?}
\end{quote}

Recently, \citet{BMM24} made progress in answering this question. For a limited class of constructions that they refer to as \emph{tree constructions} and generalizes the GGM construction, they prove that the GGM construction is optimal. In fact, their result holds even when considering PRFs with one-bit output. However, the state of affairs for more general (i.e. not necessarily tree) constructions, still remains open. In fact, previous results do not rule out the possibility of building a PRF with a \emph{single call} to the PRG.

\subsection{Our Results}
{We take a step towards answering the above question. We show two results. Our first and main result considers the already challenging setting of designing PRFs via non-adaptive calls to the underlying PRG (\ie the construction generates all of its queries to the PRG before sending them). We demonstrate a lower bound on the number of calls to the underlying PRG oracle required for a fully black-box construction of a PRF. The impossibility is for a class of black-box reductions, which we call \emph{query-bounded} reductions, which we define below. For our second result, we provide a lower bound for PRFs with a long output, which additionally includes adaptive constructions.} 

\paragraph{Lower bounds for black-box constructions with non-adaptive calls.}
Our main result demonstrates the impossibility of building a PRF with input length $\inlen$ from a PRG with an arbitrary polynomial stretch, using $c$ \emph{non-adaptive} calls, where $c=o(n/\log n)$ and $c=o(\inlen/\log\inlen)$ via a fully black-box construction. We demonstrate this impossibility for a class of black-box reductions, which we call \emph{query-bounded} reductions. 

Roughly, a reduction is query-bounded if the number of interactions it makes with the adversary is upper-bounded by a polynomial that is independent of the adversary's query complexity (for all sufficiently successful adversaries). Recall that in the PRF setting, the adversary has access to a function oracle that it may call a bounded number of times. We focus on reductions whose number of interactions with the adversary may depend on the adversary’s success probability, but is independent of the number of oracle calls the adversary makes to the underlying function within each interaction. We believe this is a natural class of reductions and captures all reductions in the literature (see \cref{sec:subtleties}). In particular, it includes reductions whose number of interactions with the adversary decreases as the adversary’s success probability increases. It is an intriguing question whether one could extend the lower bound to all black-box reductions.

\begin{theorem}[Informal, impossibility for fully black-box construction]\label{thm:intro_fbb}
	There is no fully black-box construction of a PRF with input length $\inlen$ from a PRG using {query-bounded reductions} and $c$ non-adaptive calls, where $c=o(n/\log n)$ and $c=o(\inlen/\log\inlen)$.
\end{theorem}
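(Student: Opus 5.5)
We follow the standard two-oracle template for ruling out fully black-box constructions. We exhibit an oracle $\oracle$ relative to which (i) there is a PRG $G$, implemented from $\oracle$, that no efficient adversary with access to $\oracle$ can break, yet (ii) for every candidate construction $\prf^G$ making $c$ non-adaptive calls there is an (inefficient, but query-efficient) adversary $\Break$ that distinguishes $\prf^G$ from a random function. Any query-bounded reduction $R^{\Break,G}$ would then yield an $\oracle$-efficient breaker of $G$, contradicting (i). Concretely, we take $G$ to be a random function $\zon\to\zo^{\stretch(n)}$. The adversary $\Break$ only receives i.i.d. uniform inputs $x_1,\dots,x_m$ and one-bit outputs, so the lower bound holds even for weak PRFs.

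The attack exploits non-adaptivity. On input $x$ with key $k$, the construction computes $c$ query points $q_1(k,x),\dots,q_c(k,x)$ without seeing $G$, then outputs $b=h(k,x,G(q_1),\dots,G(q_c))$. Since $G$ is random, each evaluation reveals information about $G$ only at fresh points. Fix the key and view the query points over many samples. Either some query point recurs across many random inputs (a ``heavy'' point), or the query sets of different samples are essentially disjoint. The counting step is the following. Because $c=o(\inlen/\log\inlen)$ and $c=o(n/\log n)$, we can find a point, or a small set of $O(c\log)$ points, that is hit by a noticeable fraction of inputs. Alternatively, we show that most $x$ make only queries determined by $k$ alone, which has only $|k|$ bits of entropy, while $x$ has $\inlen$ bits. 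The adversary $\Break$ is allowed unbounded computation but polynomially many oracle queries to $G$. It guesses the heavy points, queries $G$ on them, and with the bits on the remaining points treated as random, it brute-forces over keys consistent with the observed samples. This is an Occam/union-bound argument: a hypothesis class indexed by the key, plus polylog bits of heavy-point answers, has description length much smaller than the number of samples $m$. A random function is therefore inconsistent with every hypothesis, while the real $f_k$ is consistent with some hypothesis with noticeable advantage. The quantitative role of $c\log n$ and $c\log\inlen$ is to bound the entropy of the ``query pattern'' per sample, so that a pigeonhole/compression argument forces reuse of PRG evaluations across samples.

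Next we show that $G$ stays secure against $R^{\Break,G}$. Query-boundedness is used here. $R$ invokes $\Break$ only $\poly$ times, independently of $\Break$'s internal query count. Each invocation of $\Break$ makes at most $\poly(n)$ queries to $G$, and we simulate these for $R$. The total number of $G$-queries is therefore polynomial, and a random $G$ with polynomial stretch is secure against any polynomial-query algorithm. We must also handle the possibility that $\Break$'s guesses leak information through its answers. We argue the leak is only the single distinguishing bit, which is simulable given the $G$-queries made. Finally, a standard averaging argument fixes one $G$ that works against all reductions simultaneously.

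The main obstacle is the attack in (ii). We need a combinatorial lemma showing that with $c=o(\min(n/\log n,\inlen/\log\inlen))$ non-adaptive calls, the constructions' query sets across i.i.d. inputs overlap enough, or are determined by few enough bits, for a polynomial-query learner to succeed. A compression argument is the first thing to try. If $\Break$ failed, one could encode $G$ restricted to the queried points using fewer bits than its entropy, because each sample contributes a bit while its queries cost only $c\log(\cdot)$ bits to describe relative to previous queries.
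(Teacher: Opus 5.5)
There is a genuine gap, and it is in step (ii). With $G$ a random function, no adversary making polynomially many (indeed $2^{o(n)}$) queries to $G$ can break every one-call non-adaptive construction, however much computation it has.

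Consider $f_{h,k}(x)=\langle G(h(x)),k\rangle$, with $h$ drawn from a pairwise-independent family and $k\in\{0,1\}^{n+r}$. For $m=\poly(n)$ i.i.d.\ inputs, the seeds $h(x_i)$ are each uniform and pairwise distinct except with probability about $m^2/2^n$. So no seed is heavy, and the seeds are not determined by $k$. Until the adversary queries $G$ at one of these seeds, its view is independent of $h$. Hence with $q$ queries it hits one with probability at most $qm/2^n$, and otherwise the output bits are i.i.d.\ uniform.

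This breaks both halves of your argument. Your dichotomy (``a heavy point, or queries determined by the key'') is false. The Occam/compression step also has nothing to work with: once the unqueried $G$-values are ``treated as random'', $f_k$ is itself distributed like a random function, so no hypothesis of the form ``key plus a few $G$-values'' separates the two worlds. A telltale sign is that query-boundedness does no real work in your template, since a PPT $R$ calling a polynomial-query $\mathcal{A}$ is automatically polynomial-query. The theorem, by contrast, genuinely needs it.

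The missing idea is a meta-reduction.
\begin{itemize}
\item The breaking adversary is allowed to query $G$ \emph{everywhere}. Using the greedy frequent-seed procedure of \cref{lem:recursion}, it searches over keys and maximizes over the values of $G$ on the at most $c$ frequent seeds. It then compares the number of consistent inputs in $\mathcal{I}_k$ against a random threshold $\alpha\in[0.8,1]$. A union bound over $2^{\lambda+c(n+r)}$ hypotheses shows that it breaks the PRF.
\item This exponential-query adversary is then simulated with \emph{no} $G$-queries at all, using only the query--answer pairs the reduction itself made. Unqueried $G$-values are replaced by conditional probabilities.
\item The inputs in $\mathcal{I}_k$ have mutually disjoint non-frequent seeds (\cref{clm:large_ind_set}). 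So Hoeffding makes the ideal and simulated scores $\delta$-close, and the challenge can shift at most one term. The random threshold converts this closeness into $O(\delta)$ statistical distance per call.
\item That per-call error is only inverse-polynomial, so the number of calls $d$ must be fixed before $m$. This is exactly where query-boundedness enters.
\end{itemize}
Finally, $m=(\poly(n))^{c}\cdot c!$. The condition $c=o(n/\log n)$ is what keeps the reduction's total $G$-queries below $2^{n/4}$; this bound is superpolynomial, not polynomial as in your plan, and it is handled by \cref{lem:RO_is_PRG}. The condition $c=o(\inlen/\log\inlen)$ guarantees $m\le 2^{\inlen}$, so enough distinct inputs exist.
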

{The formal statement and proof of \cref{thm:intro_fbb} appear in \cref{sec:fullyBB}. A few notes are in place. First, our proof works even for \emph{weak} PRFs, where the adversary can only make i.i.d.\ uniform queries to the oracle. Second, unlike our second result (see \cref{thm:intro_sbb} below), \cref{thm:intro_fbb} holds for PRFs with one-bit output. Finally, we argue that as a corollary of \cref{thm:intro_fbb}, there is no fully black-box construction of a PRF from a PRG using a query-bounded reduction and a \emph{constant number} of non-adaptive calls, regardless of the input length of the PRF. Indeed, for $\inlen=O(\log n)$ PRFs can be constructed unconditionally by simply letting the key be the truth table of the function (in particular, there is no reduction). On the other hand, for $\inlen=\omega(\log n)$, \cref{thm:intro_fbb} states that there is no fully black-box construction with a constant number of calls. Note that this resolves the question of the single call case for query-bounded reductions.}

\paragraph{A lower-bound for black-box constructions with long outputs.}
We show an additional lower bound for PRFs with long outputs. Unlike in \cref{thm:intro_fbb}, here the result holds for adaptive constructions and does not require assuming the reduction to be query-bounded.

\begin{theorem}[Informal, lower bounds for long output constructions]\label{thm:intro_sbb}
	There is no fully black-box construction of a PRF with output length $\outlen$ from a PRG with stretch $\stretch$ using at most $\frac{\outlen}{\max\of{\stretch,\omega(\log n)}}$ calls.
\end{theorem}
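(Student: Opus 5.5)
We prove the statement with the standard oracle-separation approach for fully black-box constructions: we build an oracle world where a PRG exists but the candidate PRF is insecure, and then use the reduction to reach a contradiction.

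\textbf{The oracle.} Let $G:\zon\to\zo^{n+\stretch}$ be a uniformly random function. With overwhelming probability $G$ is a PRG against every adversary that makes polynomially many oracle queries, even one that is computationally unbounded apart from those queries.

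\textbf{Inefficient attack on the construction.} Fix a key $k$. Since the construction $F^G$ makes at most $\numquery$ calls per input, the output $f_k(x)$ is a deterministic function of $(k,x)$ and of at most $\numquery$ answers of $G$. Each answer has $\stretch$ bits beyond what an input of length $n$ can carry, so the output carries at most $\numquery\cdot\stretch$ bits of fresh randomness.

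The attack proceeds as follows.
\begin{itemize}
\item Query the function oracle at one (or a few) i.i.d.\ uniform points.
\item Decide whether there exist a key $k$ and values for the $\numquery$ oracle answers, consistent with $G$ where $G$ is already known, that reproduce the observed outputs.
\end{itemize}
In the stronger variant, the adversary enumerates the possible $G$-inputs used by $f_k(x)$ and checks consistency. The number of reachable output strings is at most $2^{\numquery\cdot(n+\stretch)}$. However, many of these call inputs are determined by $k$ and $x$, so we count more carefully.

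\textbf{Counting argument.} Fixing $k$ and $x$, the value $f_k(x)$ is determined by the at most $\numquery\cdot\stretch$ bits of stretch in the answers, together with the $G$-outputs on the queried inputs, which are fixed given $G$. Take the union over all $k$ and over all candidate $G$-values consistent with previously fixed queries. A truly random function produces an $\outlen$-bit output that lands in this reachable set with probability at most $2^{\numquery\stretch+\omega(\log n)-\outlen}$. When $\numquery\le \outlen/\max(\stretch,\omega(\log n))$ this probability is negligible. Meanwhile the real output always lands in the set, so the distinguishing advantage is noticeable.

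I expect the main obstacle to be making the attack usable by the reduction. The reduction $R^{G,\adv}$ must break $G$ with only polynomially many $G$-queries, but our adversary $\adv$ is unbounded. The standard fix is the "$G$-relative PSPACE-style" adversary that makes no $G$-queries itself and instead relies on the fact that the reduction's oracle $G$ is random. We would argue that the adversary's answers can be simulated by the reduction up to negligible error, treating unknown $G$-values as uniformly random. In that case $R$ breaks a random oracle with polynomially many queries, which is impossible.

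Concretely, I would define $\adv$ to accept an output $y$ if and only if there exist $k$ and $\numquery$ oracle answers producing $y$ from $x$, where the answers range over all strings in the range of $G$ or are uniformly chosen. Then I would show the following two facts:
\begin{enumerate}
\item A hybrid that replaces $G$ on inputs the reduction never queried changes $\adv$'s behaviour only negligibly, using that the acceptance test depends on $G$ only through its image.
\item Any reduction with noticeable advantage against $G$ would then contradict the oracle-PRG property.
\end{enumerate}
Handling adaptivity is harmless here, since the count above bounds the transcript entropy of an adaptive construction in the same way as a non-adaptive one.
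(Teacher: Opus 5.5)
There is a genuine gap, and it lies in the choice of oracle world. If $G$ is a uniformly random function, then $G(s)$ is unrelated to $s$, so an attacker that does not know $G$ must treat each of the $\numquery$ answers as a free $(n+\stretch)$-bit string. Your claim that each answer contributes only $\stretch$ fresh bits holds only if the attacker restricts answers to $\mathrm{Im}(G)$. But an attacker whose decision depends on $\mathrm{Im}(G)$ is essentially a distinguisher for $G$ itself, which the reduction can exploit directly, e.g.\ by planting its challenge $y^*$ as a PRG answer. So your hybrid in item 1 is false: resampling $G$ on unqueried points changes $\mathrm{Im}(G)$, and in particular changes whether $y^*\in\mathrm{Im}(G)$.

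No repair is possible in this world. Take $f_h(x)=G(h(x))$ with $h$ pairwise independent and $\outlen=n+\stretch$. This is a one-call construction inside the theorem's range. Yet when $G$ is a random function it is statistically indistinguishable from a random function to any adversary making polynomially many $G$-queries; this is the obstruction the paper describes in its overview. Hence no attacker that avoids querying $G$ can break it. The counting also fails on its own terms: one or a few queries cannot overcome the $2^{\keylen}$ candidate keys, and $2^{\numquery\stretch+\omega(\log n)-\outlen}$ is not even below $1$ when $\numquery\stretch=\outlen$.

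The missing idea is to use a structured PRG that leaks almost all of its seed: $G(s)=G'(s_{<w})\conc s_{\geq w}$, with $w=\omega(\log n)$ and $G':\zo^{w-1}\to\zo^{w+\stretch-1}$. Given its input, every answer now has only $w+\stretch-1$ unknown bits. With $\numquery=\outlen/(\stretch+w)$, the attacker queries $\numinputquery=(n+\keylen)/\numquery$ points. It accepts iff some key $k$ and strings $y_{i,j}\in\zo^{w+\stretch-1}$ reproduce all outputs, where answers are $y_{i,j}\conc S_j(\cdots)_{\geq w}$ and the adaptive seeds are recomputed from earlier such answers. It always accepts $f^G_k$. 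A random function passes with probability at most $2^{\keylen+\numinputquery\numquery(w+\stretch-1)-\numinputquery\outlen}=2^{\keylen-\numinputquery\numquery}=2^{-n}$: the one-bit deficit per call, accumulated over $\numinputquery$ inputs, absorbs the key entropy.

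Crucially, this attacker never touches $G'$, so no simulation or meta-reduction is needed. Take $G'$ to be a PRG relative to the fixed oracle $\adv$ (\cref{lem:PRG_wrt_oracle}). Then $R^{\adv,G}$ breaking $G$ yields a \ppt $\tilR^{\adv,G'}$ breaking $G'$, a contradiction. Your remark that adaptivity is harmless is correct in this setup, precisely because the attacker enumerates the unknown answers itself.
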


Stated differently, if $\stretch=O(\log n)$, then \cref{thm:intro_sbb} states that there is no black-box construction using $o(\outlen/\log n)$ calls to the PRG. On the other hand, if $\stretch=\omega(\log n)$, then there is no black-box construction using $o(\outlen/\stretch)$ calls. Additionally, similarly to \cref{thm:intro_fbb}, the proof of \cref{thm:intro_sbb} also applies to weak PRFs.

Let us compare \cref{thm:intro_sbb} to the GGM construction (note that in the GGM construction, the output of the PRF is of length $n$, making it natural to compare to). The GGM construction (combined with Levin's trick) is an $n$-bit output PRF from an $n$-bit stretch PRG using $\omega(\log n)$ calls to the PRG. The PRG can be constructed from a 1-bit stretch PRG using $O(n/\log n)$ calls using the Goldreich-Levin hardcore function theorem \cite{GL89}. Thus, the overall number of calls the PRF construction makes to a 1-bit stretch PRG is $\omega(n)$. By \cref{thm:intro_sbb}, the number of calls must be at least $n/\omega(\log n)$. This means that the GGM construction is optimal up to a factor of $\omega(\log n)$. The formal statement of \cref{thm:intro_sbb} appears in \cref{sec:LB_PRF}.

\subsection{Related Works}
Separation of cryptographic primitives is a fundamental question in cryptography that qualitatively measures the relative hardness of different cryptographic primitives. The seminal work of \citet{IR89} provided the first separations ruling out constructions of key exchange from one-way functions. The class of constructions ruled out in this work is referred to as a fully black-box construction (see \cite{RTV04} for a taxonomy on reductions) where the construction uses the underlying primitive as a black-box and the reduction is black-box. 

Within the framework of black-box separations, the \emph{meta-reduction} paradigm introduced in the work of \citet{BV98}, considers designing a (possibly inefficient) attacker $\adv$ that when invoked with the reduction breaks the challenge in the underlying security reduction, then emulating $\adv$ in a manner that is efficient and does not affect the interaction of the reduction and the challenger (see also \cite{BressonMV08,HaitnerRS09,FischlinS10,Pass11,Pass13}). We follow the meta-reduction paradigm in this work (see \cite{BBF13} and references therein). 

\citet{MV11,MV15} were the first to consider lower bounds on black-box constructions of PRFs from PRGs.\footnote{The paper considers the question of extending the stretch of a PRG; however, they noted in \cite{MV15} that their result applies to constructing PRFs from PRGs as well.} They ruled out single-call constructions where the output is one of the PRG's output bits. More recently, \citet{BMM24} both generalized the result of \cite{MV11,MV15}, and proved a lower bound on the number of calls in tree constructions, which naturally generalize the GGM construction. Additionally, they showed a lower bound on the key length required for black-box constructions.

A related question is that of constructing PRFs from non-adaptive PRFs (naPRF), where security is only guaranteed to hold against adversaries who make non-adaptive queries. \citet{BH12} showed a non-uniform construction of a PRF from a naPRF that uses a single call. They further showed a uniform construction using $\omega(1)$ calls. Related to our results, \citet{ST20} showed that certain fully black-box non-adaptive constructions of PRFs from naPRFs are impossible.

\subsection{Organization}
We give an overview of our techniques in \cref{sec:overview}. The preliminaries are given in \cref{sec:prelim}. Then, in \cref{sec:fullyBB} we prove our main result, showing a lower bound for constructions with non-adaptive calls. Finally, in \cref{sec:LB_PRF}, we prove our lower bound for PRFs with long outputs.

\section{Our Techniques}\label{sec:overview}
We now present our techniques. We start with showing the impossibility for non-adaptive black-box constructions.
\subsection{Lower Bounds for Fully Black-Box Constructions}

As a warm-up, we will rule out PRF constructions that make a single call to a PRG and then generalize this argument for multiple non-adaptive calls. 
\ifdefined\IsLLNCS
\subsubsection{Warm-Up: The Single Call Case}
\else
\subsubsection{Warm-Up: The Single Call Case}
\fi
Since we assume the construction makes one call to the PRG, we may write it as
$$f^{G}_k(x)=P\of{x,k,G(S(x,k))},$$
for efficiently computable functions $S$ and $P$. In the following, let $\keylen$ denote the length of the keys, and let $\stretch$ denote the stretch of $G$.

Standard black-box separation like the approaches in \cite{MV11,BMM24} construct an oracle $\oracle$ relative to which a PRG $G$ exists, and construct an oracle-aided algorithm $\Break^{\oracle}$ that breaks the security of $\prf$ using a polynomial number of calls to $\oracle$. 
We argue that it may be impossible to use such an approach to rule out black-box constructions of PRF from a PRG. 
Consider the following PRF construction from an $\stretch$-bit stretch PRG $G$:
$$f^G_{h,k}(x)=\iprod{G(h(x)),k}$$
where the key is a pair $(h,k)$, where $h$ is sampled from a pairwise independent hash function family, $k\from\zo^{n+\stretch}$, and $\langle\cdot,\cdot\rangle$ is the inner product computed modulo 2. If $G$ is instantiated via a random oracle, then, for any polynomial set of inputs $x_1,\ldots,x_m$, the set $\{h(x_1),\ldots,h(x_m)\}$ will all be distinct (except with negligible probability). This would imply that the entropy of each $G(x_i)$ is large enough (given the other outputs of $f$) and the inner product can extract a random bit, such that the $\numinputquery$ bits are statistically close to a uniform sample from $\zo^{\numinputquery}$. In fact, this argument works even if $G$ is only ``mildly'' random, \eg $G$ 
applies a random function only to the first $\omega(\log n)$ bits. Since the proof of security of the PRF is information-theoretic and does not depend on the reduction, it may be impossible to rule out such constructions using random oracles and an adversary $\Break^{\oracle}$ that calls $\oracle$ a polynomial number of times. Since the work of \citet{GT00}, most black-box separations from a random oracle have pursued this approach.

We circumvent this challenge by relying on the meta-reduction paradigm. We first define an ``ideal'' adversary that breaks the PRF using an \emph{exponential number} of queries to the PRG. We then construct an \emph{inefficient} ``real'' adversary that can simulate the ideal adversary when given the queries made by the reduction to the PRG. The idea here is that such an adversary can be simulated together with the reduction.  
In other words, we construct  an algorithm that breaks the security of the PRG by emulating the reduction with real adversary that together make only polynomial queries to the underlying PRG.

\paragraph{A naive attempt.} Let us start with a naive construction for the real and ideal adversaries that \emph{does not} work.
Define the ideal adversary $\adv_{\ideal}^{G,f}$ to query $G$ on all inputs, and query the PRF/random function $f$ on $\numinputquery$ distinct inputs $x_1,\ldots,x_\numinputquery$. The adversary outputs 1, indicating PRF, if and only if it can find a key that is consistent with all the responses. 
Clearly, if $f$ is sampled from the PRF family, $\adv_\ideal$  will always output 1. When $f$ is sampled from the random function family, we can show that it will output 1 with negligible probability, assuming $\numinputquery$ is sufficiently large via an entropy argument. Since the key length is $\keylen$, if $m = \keylen + n$, then except with probability $2^{-n}$, there will not be a consistent key.

Now, consider the following ``real adversary'' that aims to emulate the ideal adversary. This adversary queries its oracle (whose answers are given by reduction) on $\numinputquery$ inputs just like the ideal adversary, but along with the responses from the reduction, it also receives all queries made by the reduction to the PRG. The idea is that in order for the responses to be consistent with the PRF family, the reduction must have queried the PRG on all the points needed to emulate the PRF for some key $k$. Thus, using the query-answers provided to the real adversary, we let it (in exponential time) find the key $k$ without making any additional queries to the PRG itself. If it finds such a key, the adversary outputs 1 indicating PRF and outputs 0 otherwise. 

This adversary seemingly emulates the ideal adversary.  However, this is not the case. 
The reduction does not have to make all the queries to the PRG consistent with a particular key to compute the answers. In fact, in the GGM reduction, the adversary implants the challenge string as the output of a PRG call. This means that the above strategy for the real adversary fails, because the reduction may have computed some of the outputs using some key and querying $G$, but guessed the other outputs (without querying $G$). Therefore, the real adversary cannot find a consistent key, while the ideal adversary may find one.

Our first observation is that the reduction cannot guess too many of the outputs, while still being consistent with some key. Indeed, otherwise, the probability that it is consistent with some key becomes negligibly small. Thus, we can revise the strategy of the two adversaries to only look for a key that is consistent with, say, 0.8 fraction\footnote{The choice of 0.8 is arbitrary; any constant larger than 1/2 and smaller than 1 works.} of the inputs. 
Indeed, when using 0.8 as the threshold, if the ideal adversary interacts with a random function $f$, then for every key, on expectation, only 1/2 of the answers are consistent with the inputs. By a simple concentration bound, it will not find a key except with negligible probability.

Unfortunately, there is the issue of the reduction guessing too \emph{few} of the outputs. Specifically, in the case where the reduction computed correctly only 0.7 fraction of the inputs, we do not know how to show that the ideal adversary will not find a \emph{different} key that is consistent with 0.8 fraction of the inputs.

We introduce two ideas to solve this. Towards describing our ideas, we first introduce two \emph{score functions} (one for each adversary). For the ideal adversary, the score function \wrt some key is simply the number of consistent input-output pairs. 
The real adversary estimates the score function of the ideal adversary by considering the probability that it will be consistent with the answers (instead of a binary value). Specifically, for every key $k$ the real score is computed as follows: for every $x$ that is one of the $\numinputquery$ inputs (and $S(x,k)$ has not been queried by the reduction), compute the probability over a random $G'$ that is consistent with the reduction queries, that $f^{G'}_{k}(x)$ equals the value given to it by the reduction. This probability can be computed exactly by the real adversary. The final score of the real adversary for $k$ is the sum of these probabilities over all inputs. By linearity of expectation, it is equal to the expected score of the ideal adversary (for the same key).
The final score of each of the real and ideal adversaries is
its maximal score over all keys, and the answer (PRF/random function) is defined by comparing this score to some threshold.

However, choosing a fixed threshold could fail in case it falls between the real and ideal scores (which could happen even if the scores are very close). Therefore, the second idea is to choose a \emph{random threshold} between 0.8 and 1. This helps to smoothen the difference and as long as the real and ideal scores are close enough.

However, there are still two issues with the above argument.
\begin{enumerate}
	\item Although for every $k$, the score of the real adversary equals the expected score of the ideal adversary, the final outputs of the PRF on the $\numinputquery$ inputs might be dependent. Hence, we cannot claim that the scores are close with high probability. For a key $k$, such dependencies occur if for two different inputs $x_i$ and $x_j$, $S(x_i,k)=S(x_j,k)$.

	\item The second issue relates to the challenge $\chall$ held by the reduction. Recall that $\chall$ is either uniform over $\zo^{n+\stretch}$, or equals to $G(s)$, where $s\from\zon$. In the latter case, if the reduction implant $\chall$ instead of $G(S(x_i,k))$ for some $i$, then this increases the ideal score by 1, but increases the real score only by a negligible amount. Otherwise, the challenge affects the scores the same, which may provide the reduction a way to distinguish between the two adversaries. 
\end{enumerate}

\paragraph{Fixing the issues.}
To solve the problems, we aim to break the dependencies between the outputs of the PRF. This solves the first issue as it allows us to use Chernoff's inequality to argue that the two scores are close with high probability. It also solves the second one since if the outputs are independent, the challenge can only affect at most one output (keeping the scores close).
Towards breaking the dependencies, we observe that for every key $k$, either there is a PRG query that appears for many inputs, or we can find a large set of inputs with distinct PRG seeds.

\begin{claim}[Informal]\label{clm:intro_large_ind_set}
	For a positive integer parameter $p$ (to be chosen), one of the following must hold for every key $k$.
	\begin{enumerate}
		\item\label{item:intro_many_reps} There exists a seed that appears at least $p$ times in $\set{S(x_i,k):i\in[\numinputquery]}$, or
		
		\item There is a set $\cI\su[\numinputquery]$ of at least $m/p$ inputs with distinct PRG seeds.
	\end{enumerate}
\end{claim}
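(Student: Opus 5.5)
This is a pigeonhole argument on the multiset of seeds. Fix a key $k$ and consider the multiset $\set{S(x_i,k):i\in[\numinputquery]}$. Partition the index set $[\numinputquery]$ into classes according to the value of the seed: $i$ and $j$ are in the same class iff $S(x_i,k)=S(x_j,k)$. Let $t$ denote the number of distinct seeds, i.e., the number of classes.

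We assume the first item fails and derive the second. If it fails, every class has size at most $p-1<p$. The classes cover all $\numinputquery$ indices, so
\begin{equation*}
\numinputquery=\sum_{\text{classes } C}\abs{C}\le t\cdot (p-1)< t\cdot p,
\end{equation*}
which gives $t> \numinputquery/p$. Now define $\cI$ by choosing one representative index from each class, say the smallest one. Then $\abs{\cI}=t\ge \numinputquery/p$, and by construction the seeds $S(x_i,k)$ for $i\in\cI$ are pairwise distinct. This is exactly the second item.

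There is no real obstacle here; the counting is immediate. The only point needing care is how the claim is used later. The set $\cI$ depends on the key $k$. The representative choice, such as the minimal index, should therefore be fixed deterministically so that both the ideal and the real adversary can compute the same $\cI$ from $k$ and the seeds $\set{S(x_i,k)}$. Both adversaries can do this, since $S$ is efficiently computable and does not involve $G$.

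For the formal version, the same argument also covers the case where "appears at least $p$ times" is read with a non-strict threshold: the bound becomes $\numinputquery\le t(p-1)$, which still yields $t\ge \numinputquery/p$. If non-adaptive $\numquery$-call constructions are handled in the same way by taking all $\numquery$ seeds per input, the pigeonhole step is unchanged. Declare two indices adjacent if their seed sets intersect, and pick greedily. If no single seed appears $p$ times, each chosen index excludes at most $\numquery(p-1)$ others. This gives an independent set of size at least $\numinputquery/(\numquery p)$, up to an adjustment of the parameter.
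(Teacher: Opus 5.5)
Your proof is correct and is essentially the paper's argument. The paper builds the graph $H$ on $[m]$ with an edge whenever two inputs share a seed, notes that its maximum degree is at most $p-1$ once the first item fails, and greedily extracts an independent set of size at least $m/p$. In the single-call case $H$ is exactly the disjoint union of cliques given by your seed classes, so choosing one representative per class is the same step (and gives the slightly sharper $m/(p-1)$); your greedy sketch for $c$ non-adaptive calls likewise matches the paper's formal version of the claim in the main section.
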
 
\begin{proof}
	Consider the graph $H=([\numinputquery],E)$, where $i,j$ is an edge if and only if $S(x_i,k)=S(x_j,k)$. If \cref{item:intro_many_reps} does not hold, then the maximum degree in $H$ is at most $p-1$. By greedily selecting the vertices, we conclude there is an independent set of size at least $m/p$. By definition, the PRG seeds associated with an independent set are distinct.
\end{proof}

Now, if for a key $k$ the second item holds, then our previous argument works when restricting to $\cI$: Since we instantiate the PRG with a random oracle, this implies that the corresponding outputs are independent random variables (over the PRG random oracle). This independence allows us to use a concentration bound to argue that if $m/p$ is sufficiently large, then the real adversary can predict the output of the ideal adversary with good accuracy. 
As for the other keys, we argue that there is only a negligible fraction of them. Indeed, if for too many keys there is a seed $s$ that appears at least $p$ times then, intuitively, for a random key the entropy in the output for those $p$ inputs is bounded by $\keylen+n+\stretch$. Thus, for a sufficiently large $p$, the PRF can be distinguished from random. As a result, we may allow both the real and ideal adversaries to \emph{ignore} such keys. 

With this, for any polynomial $u$, we can choose $\numinputquery$ so that for every call made by the reduction to the adversary, the statistical gap between the real and ideal adversary's outputs is $1/u$-close. However, recall that the reduction can call the adversary polynomially many times, say $v$. If we want the argue that the statistical gap for all the combined outputs are close we need $v/u$ to be small. In other words $u$ should be chosen after $v$. Since $\numinputquery$ depends on $u$, for our impossibility to hold, we need to assume that the number of calls made by the reduction $v$ is independent of $\numinputquery$. We believe this is a reasonable restriction and discuss this further in Section~\ref{sec:subtleties}.

\ifdefined\IsLLNCS
\subsubsection{The General Case.}
\else
\subsubsection{The General Case}
\fi
We now describe our ideas to generalize the single call case to $c$  non-adaptive calls, where $c=o(n/\log n)$ and $c=o(\inlen/\log\inlen)$. Recall that to define the two adversaries, we define two score functions (one for each adversary) that aim to compute for how many inputs the reduction returned the correct output (with respect to some key).
The two score functions are defined \wrt each key, via a procedure that depends only on the construction of the PRF (in particular, it is independent of the output of the PRG). This is important, since the score function does not require the real adversary to make any PRG queries.

As in the single call case, the goal of the functions is to find a large set of inputs that make disjoint PRG queries. 
At a high level, this is done by letting the score functions try to break the dependence between some PRF outputs by fixing PRG outputs on queries that are made by many PRF inputs. 
Specifically, the procedure works in iterations as we next describe.

In each iteration, the adversaries looks for a seed $s$ that appears for sufficiently many inputs (\ie for many of the $x_i$'s, at least one of the seeds needed to compute $f^G_k(x_i)$ is $s$). For such seeds, the PRG will not contribute much entropy to the output of the PRF (taken over all inputs that query $s$). When we define the scores, we replace $G(s)$ with a fixed value that maximizes the score and is \emph{independent} of $G$. In such a case, the construction is left with one less call to the PRG. This process continues until either no $s$ appear many times, or until the process finds $c$ such seeds. In the former case, we can apply (a generalization of) \cref{clm:intro_large_ind_set} to argue that there is a large set of inputs with a mutually disjoint set of seeds, while in the latter case, we will later replace all calls to the PRG with fixed values.

In more detail, for every key $k$, the final output of the procedure is a list of $\recSteps(k)\leq c$ seeds $s_1,\ldots,s_{\recSteps(k)}$ and a list of nested sets of inputs $\cI_1\supseteq\ldots\supseteq\cI_{\recSteps(k)}$. We require that each $s_t$ appears for sufficiently many inputs in $\cI_t$. The score calculation is then done by replacing each $G(s_t)$ with a fixed value. If $\recSteps(k)=c$, then the scores are fixed and independent of $G$. Otherwise, there is a set $\cI\su\cI_{\recSteps(k)}$ of inputs with mutually disjoint seeds. The score calculation is then done the same as in the single call case.

For the argument to work, we need to ensure that at every step, the set of inputs $\cI_t$ that are being considered is sufficiently large. Specifically, if the procedure stops early, then \cref{clm:intro_large_ind_set} implies the existence of a set of $\frac{|\cI_t|}{(c-t)|\cI_{t+1}|}$ inputs with mutually disjoint queries. For the concentration bound to work, this ratio needs to be sufficiently large (Specifically, it needs to be larger than some fixed polynomial that depends on the key length, the PRG stretch, and the reduction). If the procedure replaces every $G(s_t)$, we need $|\cI_{\recSteps(k)}|$ to be sufficiently large as well to argue that the PRF can be distinguished from a random function. This implies that the number of inputs required is $\numinputquery=(\poly(n))^c\cdot c!$. {For the adversaries to be well-defined, this value must be upper bounded by $2^{\inlen}$, which holds for all sufficiently large $n$'s by our assumption that $c=o(\inlen/\log\inlen)$. Additionally}, the total number of queries that $R$ makes to the PRG is $\numcallsdist\cdot\numinputquery^\beta$, for some constant $\beta\in\NN$ and where $\numcallsdist$ is the number of calls $R$ makes to the distinguishing adversary. Recall that $c=o(n/\log n)$, hence $\numcallsdist\cdot\numinputquery^\beta\leq 2^{n/4}$ for all sufficiently large $n$'s. Since we instantiate the PRG with a random function, $2^{n/4}$ queries are insufficient to break its security (see \cref{lem:RO_is_PRG} for the formal statement), resulting in a contradiction.

\ifdefined\IsLLNCS
\subsubsection{Subtleties With Oracle Access to PRF Distinguishers.}\label{sec:subtleties}
\else
\subsubsection{Subtleties With Oracle Access to PRF Distinguishers}\label{sec:subtleties}
\fi
The standard definition of a fully black-box reduction from PRF to PRG is as follows. Given access to an oracle-adversary $\adv^{(\cdot)}$ that is able to distinguish the PRF from a random function, the reduction, a \ppt oracle-aided algorithm $R$ is able to distinguish the output of $G(s)$, where $s\from\zon$, from a string uniformly sampled from $\{0,1\}^{|G(s)|}$.

Since we fixed $R$, its run-time is some fixed polynomial, say $T(n)$. The adversary $\adv^f$ is allowed to query $f$ a polynomial $\numinputquery$ number of inputs. When activated by the reduction, $\adv$ expects $\numinputquery(n)$ answers. These should be provided by $R$; however, $\numinputquery$ is chosen by $\adv$, which is fixed after $R$, hence $\numinputquery$ cannot depend on $T$. In particular, if $\numinputquery(n)>T(n)$, then $R$ has no time to answer $\adv$'s queries.

For the model to make sense, we have to treat the $\numinputquery$ inputs chosen by $\adv$ as part of the reduction's input. However, now the reduction depends on $\adv$. For example, after receiving the $\numinputquery$ inputs from $\adv$, the reduction can decide to call it $\numinputquery$ more times. If we now replace $\adv$ with another adversary that sends $\numinputquery+1$ inputs, and apply $\adv$ on the first $\numinputquery$ answers (\ie it ignores the last input and output), then $R$ calls it more times than with $\adv$, thus the entire procedure changed, even though the adversary does the same. Such reductions are arguably not ``black-box''. 

{
Recall that in our proof, we had to assume that the reduction cannot depend on the number of inputs given by the distinguishing oracle. This means that our proof does not hold for the standard notion of fully black-box constructions. Instead, we consider \emph{query-bounded} reductions.
These are reductions whose number of calls to the adversary is bounded by some polynomial $\numcallsdist(n)$ for all adversaries whose distinguishing advantage is sufficiently large. Specifically, for our proof, we assume that there exists a polynomial $\numcallsdist(n)$ such that for any adversary $\adv$ whose distinguishing advantage is at least $\alpha(n)=1-e^{-n}$, the reduction calls $\adv$ at most $\numcallsdist(n)$ times. 
Note that any reduction whose number of calls to $\adv$ \emph{decreases} as a function of the distinguishing advantage of $\adv$ is also query-bounded (perhaps with a different $\alpha(n)$). We argue that the class of query-bounded reductions is both natural and well-motivated. Indeed, all reductions previously considered in the literature, to the best of our knowledge, satisfy this property and thus fall within the class of query-bounded reductions. This includes, for example, the GGM construction \cite{GGM86}, \citet{GL89} hardcore predicate construction, and \citet{Waters05} construction of signature schemes. We refer the reader to \cref{def:bounded} for a formal definition of query-bounded reductions.}

\subsection{Lower Bounds for PRFs With Long Outputs}
We now prove \cref{thm:intro_sbb}. To simplify this introduction, we will prove a simpler result that considers constructions with a single call to the PRG (see \cref{sec:LB_PRF} for the proof of the general result). Although it is not as general, the proof highlights the key ideas of our techniques. Specifically, we show that if $\outlen-\stretch=\omega(\log n)$, then there is no black-box construction of a PRF with output length $\outlen$ from a PRG with stretch $\stretch$ using a single call to the PRG.

Let $\inlen=\inlen(n)$ and assume that $\prf=\sset{f_k^G:\zo^{\inlen}\to\zo^{\outlen}}_{n\in\NN,k\in\zo^\keylen}$ is a PRF construction that calls $G:\zo^n\to\zo^{n+r}$ once.
The main idea of the proof is to consider the entropy of sufficiently many applications of the function. That is, given access to a function $f$ (that is either truly random or sampled from $\prf$), we consider the total entropy of $\numinputquery$ applications of $f$ on different inputs, where $\numinputquery=\numinputquery(n)$ is a sufficiently large polynomial. Observe that for a random function the entropy is maximized, \ie it is $\outlen\cdot \numinputquery$. However, we show how to construct a PRG $G$ such that the entropy of $\numinputquery$ applications of $f^G_k$ is noticeably far from $\outlen\cdot\numinputquery$. Intuitively, this is achieved by considering a PRG that is only applied to the first $\outlen-\stretch-1=\omega(\log n)$ bits of the input. This ensures that the entropy in the output comes from the key $k$ and the $\outlen-1$ (pseudo)random bits obtained from the stretch of $G$. Thus, the entropy will be at most $\keylen+\numinputquery\cdot(\outlen-1)$. Taking $\numinputquery=\keylen+n$ results in a gap of $n$ from $\outlen\cdot\numinputquery$, which we show is sufficient.

We note that the above argument works for all fully black-box reductions, and not just query-bounded ones, as was needed in \cref{thm:intro_fbb}.
\section{Preliminaries}\label{sec:prelim}
\subsection{Notations}
For $n,m\in\NN$ we let $[n]=\{1,2\ldots n\}$ and let $\binom{[n]}{m}=\set{\cS\su[n]:|\cS|=m}$. All logarithms are taken in base 2. For a set $\cS$, we write $s\from\cS$ to indicate that $s$ is selected uniformly at random from $\cS$. Given a random variable (or a distribution) $X$, we write $x\from X$ to indicate that $x$ is selected according to $X$. \ppt stands for probabilistic polynomial time. A function $\mu:\NN\to[0,1]$ is called negligible, if for every positive polynomial $p(\cdot)$ and all sufficiently large $n$, $\mu(n)<1/p(n)$. For $n,m\in\NN$, we let $\funcSpace_{n,m}=\set{f:\zon\to\zo^m}$ denote the set of all functions from $\zon$ to $\zom$, and let $\funcSpace_n=\funcSpace_{n,1}$.

\begin{theorem}[Hoeffding's inequality]\label{thm:Hoeffding}
	Let $X_1,\ldots,X_n\in\zo$ denote independent random variables such that $\pr{X_i=1}=p_i$ for all $i\in[n]$ for some $p_i\in(0,1)$. Let $p=\sum_{i=1}^n p_i$. Then for every $t>0$,
	$$\pr{\sum_{i=1}^n X_i\geq pn+t}\leq e^{-2t^2/n}$$
	and
	$$\pr{\sum_{i=1}^n X_i\leq pn-t}\leq e^{-2t^2/n}.$$
\end{theorem}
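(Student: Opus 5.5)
We prove the statement through the exponential-moment (Chernoff) method combined with Hoeffding's lemma. We read $p$ as the \emph{average} $\frac1n\sum_{i=1}^n p_i$, so that $pn=\ex{\sum_i X_i}$. This is the normalization under which both displayed bounds are the standard ones. If $p$ is taken literally as the sum, the upper-tail bound still follows from the averaged version, because $pn\ge \frac1n\sum_i p_i\cdot n$ only moves the threshold further out. The lower tail, however, genuinely needs the averaged reading, so that is the version we establish.

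For the upper tail, let $Y_i=X_i-p_i$ and $S=\sum_i Y_i$. The goal is $\pr{S\ge t}\le e^{-2t^2/n}$.
\begin{enumerate}
\item Fix $\theta>0$ and apply Markov's inequality to $e^{\theta S}$, giving $\pr{S\ge t}\le e^{-\theta t}\,\ex{e^{\theta S}}$.
\item By independence of the $X_i$, the moment generating function factors: $\ex{e^{\theta S}}=\prod_i \ex{e^{\theta Y_i}}$.
\item Apply Hoeffding's lemma to each factor. Each $Y_i$ has mean zero and takes values in the interval $[-p_i,1-p_i]$ of length $1$, so $\ex{e^{\theta Y_i}}\le e^{\theta^2/8}$.
\item Combining, $\pr{S\ge t}\le \exp(-\theta t+n\theta^2/8)$. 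Choosing $\theta=4t/n$ yields $e^{-2t^2/n}$.
\end{enumerate}

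The lower tail follows by the same argument applied to the variables $1-X_i$. These are again independent Bernoulli variables, with parameters $1-p_i\in(0,1)$, and the event $\sum_i X_i\le pn-t$ is exactly the event $\sum_i(1-X_i)\ge (1-p)n+t$. The upper-tail bound for the $1-X_i$ therefore gives $e^{-2t^2/n}$.

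The only nontrivial ingredient is Hoeffding's lemma. For a mean-zero variable $Y$ with values in $[a,b]$, the lemma states that $\ex{e^{\theta Y}}\le e^{\theta^2(b-a)^2/8}$. We prove it in two steps.
\begin{enumerate}
\item By convexity of $y\mapsto e^{\theta y}$, we have $e^{\theta y}\le \frac{b-y}{b-a}e^{\theta a}+\frac{y-a}{b-a}e^{\theta b}$ on $[a,b]$. Taking expectations and using $\ex{Y}=0$ gives $\ex{e^{\theta Y}}\le e^{L(h)}$. Here $h=\theta(b-a)$, $q=-a/(b-a)$, and $L(h)=-hq+\log(1-q+qe^h)$.
\item Compute $L(0)=L'(0)=0$ and $L''(h)=\rho(1-\rho)\le 1/4$, where $\rho=qe^h/(1-q+qe^h)$. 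Taylor's theorem then gives $L(h)\le h^2/8$.
\end{enumerate}
This calculus step is the one place where care is needed. In our Bernoulli case $b-a=1$ and the variables take only two values, so the convexity step is in fact an equality, which makes the computation especially clean.
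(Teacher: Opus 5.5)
Your proof is correct. It is the standard argument: Markov's inequality applied to $e^{\theta S}$, factorization by independence, Hoeffding's lemma giving $e^{\theta^2/8}$ per term, and the choice $\theta=4t/n$. The paper gives no proof of this statement at all and simply quotes it as a classical fact, so there is no alternative route to compare against.

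Your normalization remark is right. With $p=\sum_i p_i$ taken literally, the lower tail is false. For example, $n=2$, $p_1=p_2=0.9$, $t=1$ gives $\pr{X_1+X_2\leq 2.6}=1>e^{-1}$. The paper's own applications use the averaged reading, e.g.\ $p_i=1/2$ with threshold $0.8\abs{\lastset_k}$ in \cref{clm:ideal_works}.

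Your argument also works verbatim for $p_i\in[0,1]$ rather than $(0,1)$. This is what the paper actually needs in \cref{lem:close_scores_general}, where some of the indicators can be deterministic given $\cG$.
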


\subsection{Pseudorandom Generators}
\begin{definition}[Pseudorandom generator (PRG)]
	An efficiently computable function $G$ is an $\stretch(n)$-bit stretch PRG if for every $n\in\NN$ and $s\in\zos$, $|G(s)|=|s|+\stretch(|s|)$ and for every \ppt algorithm $\adv$, there exists a negligible function $\mu$ such that
	$$\abs{\ppr{s\from\zon}{\adv(1^n,G(s))=1}-\ppr{y\from\zo^{n+\stretch(n)}}{\adv(1^n,y)=1}}\leq\mu(n).$$
\end{definition}

The following lemma states that a random function is a PRG with overwhelming probability.
\begin{lemma}\label{lem:RO_is_PRG}
	The following holds for every $\stretch:\NN\to\NN$, and all sufficiently large $n\in\NN$. Let $G:\zon\to\zo^{n+\stretch(n)}$ be a function chosen uniformly at random and $C$ be a $2^{n/4}$-call oracle aided (possibly inefficient) algorithm. Then with probability at least $1-2^{-n/2}$,
	$$\abs{\ppr{s\from\zon}{C^{G}(G(s))=1}-\ppr{y\from\zo^{n+\stretch(n)}}{C^{G}(y)=1}}\leq2^{-n/4}.$$
\end{lemma}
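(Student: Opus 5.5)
The plan is a two-stage argument: first bound the distinguishing advantage in expectation over a uniformly random $G$, then convert that expectation bound into a high-probability statement with Markov's inequality. We may assume $C$ is deterministic: fix its random coins, or average over them at the end, since both the advantage and the Markov step behave well under averaging. We also assume that in each run $C$ makes at most $q=2^{n/4}$ queries.

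\textbf{Expectation bound.} Consider the joint experiment where $G$ is uniform and $s\from\zon$. On input $G(s)$, define the ``bad'' event $B$ that $C$ queries $G$ on $s$. Until $B$ occurs, the view of $C$ given $(G,G(s))$ is distributed exactly as in the experiment where $C$ receives a uniform $y\from\zo^{n+\stretch(n)}$ that is independent of $G$. The reason is that $G(s)$ is a fresh uniform string independent of $G$ restricted to $\zon\setminus\{s\}$, and before $B$ the algorithm only sees $G$ on points other than $s$. So we compare with the hybrid where $C$ gets $y$ uniform and answers are from $G$.

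Next we bound $\pr{B}$, and the cleanest way is to work in the $y$-world. There $s$ is a uniform string independent of everything $C$ sees, except the information that the $G$-values $C$ has observed do not ``match'' in a way that reveals $s$. More precisely, we use the standard identical-until-bad argument: in the ideal world $s$ is independent of $C$'s view, so each of the $q$ queries hits $s$ with probability exactly $2^{-n}$. Hence $\pr{B}\le q\cdot 2^{-n}=2^{-3n/4}$. The identical-until-bad lemma then gives
\[
\abs{\Pr_{G,s}[C^G(G(s))=1]-\Pr_{G,y}[C^G(y)=1]}\le 2^{-3n/4}.
\]

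\textbf{High-probability statement.} This is the main obstacle, because we need the bound to hold for most fixed $G$, not just on average. The difficulty is that a signed difference averaging to nearly $0$ could still be large for many $G$. To handle this, I would bound the expectation of the absolute value directly by a coupling argument for each fixed $G$. For fixed $G$, let $\delta(G)=\Pr_{s}[C^G\text{ queries } s \text{ on input } G(s)]+\Pr_{s,y}[C^G\text{ on input }y\text{ queries }s]$.

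The key step is to show that the advantage for fixed $G$ is at most $\delta(G)+\SD(G(s),U)$-type terms. Here $G(s)$ is compared against the distribution of $y$ paired with a uniform $s$, with the value at $s$ resampled: $G_{s\mapsto y}$ equals $G$ except at $s$, where it is $y$, and it is again uniform. Since $C^G(y)$ and $C^{G_{s\mapsto y}}(y)$ agree unless $C$ queries $s$, the term $\Pr_{s,y}[C^{G_{s\mapsto y}}(y)=1]$ stays within $\Pr[\text{query } s]$ of $\Pr_y[C^G(y)=1]$. Averaging over $G$ and using that $(G_{s\mapsto y},y)\equiv(G,G(s))$ in distribution, we get $\ex{\text{adv}(G)}\le \ex{\delta'(G)}$ for a nonnegative quantity $\delta'$ with expectation at most $2q2^{-n}$. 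The fallback, which I would try first if this is too messy, is to bound $\ex{\text{adv}(G)^2}$ via a second-moment/variance computation.

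Finally, apply Markov's inequality to the nonnegative bad-event probabilities:
\[
\Pr_G\bigl[\delta'(G)>2^{-n/4}\bigr]\le 2\cdot 2^{-3n/4}/2^{-n/4}=2^{1-n/2},
\]
which is at most $2^{-n/2}$ after tightening constants or taking $n$ sufficiently large with slightly better counting. Outside this event the advantage is at most $2^{-n/4}$, which completes the proof.
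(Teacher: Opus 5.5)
Your first stage is correct and coincides with the paper's proof. The paper does exactly this identical-until-bad bound on the $G$-averaged, signed difference and then applies Markov's inequality. To apply Markov it replaces $\eex{G}{\abs{a(G)-b(G)}}$ by $\abs{\eex{G}{a(G)}-\eex{G}{b(G)}}$, which is the wrong direction of Jensen's inequality. So the obstacle you flag is real, and the paper does not handle it either.

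However, your coupling does not remove it. For each fixed $G$ the resampling argument does give $\abs{\ppr{y}{C^G(y)=1}-B(G)}\le q2^{-n}$, where $B(G)=\ppr{s,y}{C^{G_{s\mapsto y}}(y)=1}$. For the other half, the identity $(G_{s\mapsto y},y)\equiv(G,G(s))$ only yields $\eex{G}{B(G)}=\eex{G}{\ppr{s}{C^G(G(s))=1}}$. That is again an equality of signed averages. Nothing bounds $\abs{\ppr{s}{C^G(G(s))=1}-B(G)}$ for individual $G$, and the pointwise inequality $\mathrm{adv}(G)\le\delta'(G)$ that your final Markov step needs is never established.

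It is in fact false, and so is the averaged version. Let $C$ make no queries and output the first bit of its input. Every bad-event probability is then $0$. On the other hand, $\mathrm{adv}(G)=\abs{N_G2^{-n}-1/2}$ with $N_G\sim\mathrm{Bin}(2^n,1/2)$, so $\eex{G}{\mathrm{adv}(G)}=\Theta(2^{-n/2})$, far above $2q2^{-n}=2^{1-3n/4}$. The per-$G$ advantage contains a sampling fluctuation of order $2^{-n/2}$ that no identical-until-bad event sees. It must be controlled by concentration over the independent values of $G$. The same example shows that no first-moment bound followed by Markov can give the stated parameters: $\Theta(2^{-n/2})/2^{-n/4}$ is a failure probability of order $2^{-n/4}$, not $2^{-n/2}$.

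Your fallback is the route that can be made rigorous. Let $\Delta(G)=\ppr{s}{C^G(G(s))=1}-\ppr{y}{C^G(y)=1}$, and write $\Delta(G)^2$ as an expectation over two independent pairs $(s,y)$ and $(s',y')$, with independent coins for $C$. Consider the experiment in which $G(s)$ and $G(s')$ are replaced by fresh strings independent of $G$; there the expectation is exactly $0$. Now run an identical-until-bad argument on the four runs, where ``bad'' means some run queries $s$ or $s'$. Measure its probability in that experiment, where $s,s'$ are independent of the views. This gives $\eex{G}{\Delta(G)^2}=O(q2^{-n})$. Chebyshev then yields $\ppr{G}{\mathrm{adv}(G)>2^{-n/8}}=O(2^{-n/2})$, or a failure probability of $O(2^{-n/4})$ at threshold $2^{-n/4}$.

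This is weaker than the constants in the statement. Reaching them via Chebyshev would need $\eex{G}{\Delta(G)^2}\le 2^{-n}$, which this crude bound does not give. It does suffice for how the lemma is used: in the proof of \cref{thm:no_fbb} it only has to contradict the claim that $\tilR$ has advantage at least $1/(3n^\gamma)$ for a quarter of the $G$'s.
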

\begin{proof}
    By Markov's inequality, it suffices to show that
    $$\eex{G\from\funcSpace_{n,n+\stretch(n)}}{\abs{\ppr{s\from\zon}{C^{G}(G(s))=1}-\ppr{y\from\zo^{n+\stretch(n)}}{C^{G}(y)=1}}}\leq2^{-3n/4}.$$
    This is done by lazy sampling \cite{BR06}. Assume \wlg that $C$ makes only distinct queries to the oracle. Let $P_0:\zon\to\zo^{n+\stretch(n)}$ be the procedure defined as follows:
    \begin{enumerate}
        \item Sample $s\from\zon$ and $y\from\zo^{n+\stretch(n)}$, independently.
        \item On query $x\in\zon$, sample $y^*\from\zo^{n+\stretch(n)}$.
        \item If $s=x$, then set $bad=\true$. Set $y=y^*$.
        \item Return $y$.
    \end{enumerate}
    Additionally, let $P_1:\zon\to\zo^{n+\stretch(n)}$ be the procedure defined as follows:
    \begin{enumerate}
        \item Sample $s\from\zon$ and $y\from\zo^{n+\stretch(n)}$, independently.
        \item On query $x\in\zon$, sample $y^*\from\zo^{n+\stretch(n)}$.
        \item If $s=x$, then set $bad=\true$.
        \item Return $y$.
    \end{enumerate}
    First, note that
    \begin{align*}
        \eex{G\from\funcSpace_{n,n+\stretch(n)}}{\ppr{s\from\zon}{C^{G}(G(s))=1}}=\ppr{y^*\from\zo^{n+\stretch(n)}}{C^{P_0}(y^*)=1}
    \end{align*}
    and that
    \begin{align*}
        \eex{G\from\funcSpace_{n,n+\stretch(n)}}{\ppr{y\from\zo^{n+\stretch(n)}}{C^{G}(y)=1}}=\ppr{y\from\zo^{n+\stretch(n)}}{C^{P_1}(y)=1}.
    \end{align*}
    Therefore,
    \begin{align*}
        &\eex{G\from\funcSpace_{n,n+\stretch(n)}}{\abs{\ppr{s\from\zon}{C^{G}(G(s))=1}-\ppr{y\from\zo^{n+\stretch(n)}}{C^{G}(y)=1}}}\\
        &\quad\leq\abs{\ppr{y^*\from\zo^{n+\stretch(n)}}{C^{P_0}(y^*)=1}-\ppr{y\from\zo^{n+\stretch(n)}}{C^{P_1}(y)=1}}
    \end{align*}
    
    Now, observe that the procedures $P_0$ and $P_1$ are identical-until-$bad$, namely, they return identically distributed answers until $bad$ occurs. Since $C$ can make $2^{n/4}$ queries, $bad$ occurs with probability at most $2^{n/4-n}=2^{-3n/4}$. Therefore,
    \begin{align*}
        \abs{\ppr{y^*\from\zo^{n+\stretch(n)}}{C^{P_0}(y^*)=1}-\ppr{y\from\zo^{n+\stretch(n)}}{C^{P_1}(y)=1}}
        \leq2^{-3n/4}.
    \end{align*}
\end{proof}

\subsection{Black-Box Constructions of Pseudorandom Functions}
We next define black-box constructions of a pseudorandom function (PRF) from PRGs.

\begin{definition}[Fully black-box PRF constructions]\label{def:fbb}
	A \emph{fully black-box construction} of a PRF from an $\stretch(n)$-bit stretch PRG is a pair $(\prf,R)$ where $$\prf=\set{f^{(\cdot)}_k:\zo^{\inlen(n)}\to\zo^{\outlen(n)}}_{n\in\NN,k\in\zo^{\keylen(n)}}$$ is an efficiently computable oracle-aided function family, and $R^{(\cdot),(\cdot)}$ is a \ppt oracle-aided algorithm such that the following holds. There exists a polynomial $q:\NN\to\NN$ such that for every $G=\sset{G_n:\zon\to\zo^{n+\stretch(n)}}_{n\in\NN}$ and every oracle-aided algorithm $\Break^{(\cdot)}$ for which there exists a constant $c\in\NN$ such that
	$$\abs{\ppr{k\from\zo^{\keylen(n)}}{\Break^{f^G_k}(1^n)=1}-\ppr{f\from\funcSpace_{\inlen(n),\outlen(n)}}{\Break^{f}(1^n)=1}}\geq\frac{1}{n^c}$$
	for infinitely many $n$'s, 
	$$\abs{\ppr{s\from\zon}{R^{\Break,G}(1^n,G(s))=1}-\ppr{y\from\zo^{n+\stretch(n)}}{R^{\Break,G}(1^n,y)=1}}\geq\frac{1}{q(n^c)}$$
	for infinitely many $n$'s.
	
	$(\prf,R)$ is fully black-box construction of a \emph{weak PRF}, if the above holds for all algorithms $\adv$ that query i.i.d.\ uniform queries to $f$.
\end{definition}

We next define the query complexity of a black-box construction.
\begin{definition}[$\numquery$-call adaptive and non-adaptive constructions]
	Let $\inlen(\secParam)$, $\outlen(\secParam)$, $\stretch(\secParam)$, and $\keylen(\secParam)$ be polynomials, and let 
	$$\prf=\set{f^{(\cdot)}_k:\zo^{\inlen(n)}\to\zo^{\outlen(n)}}_{n\in\NN,k\in\zo^{\keylen(n)}}$$
	be a black-box PRF construction from an $\stretch(n)$-bit stretch PRG. For a function $\numquery:\NN\to\NN$, we say that $\prf$ is a \emph{$\numquery(\cdot)$-call construction} if for every $i\in[\numquery(n)]$ there exist $S_i:\zo^{\inlen(n)}\times\zo^{\keylen(n)}\times\sparen{\zo^{n+\stretch(n)}}^{i-1}\to\zon$ and $P:\zo^{\inlen(n)}\times\zo^{\keylen(n)}\times\sparen{\zo^{n+\stretch(n)}}^{\numquery(n)}\to\zo^{\outlen(n)}$ such that
	$$f^G_k(x)=P\of{x,k,G\of{s_1},\ldots,G(s_\numquery)}$$
	for all $G:\zon\to\zo^{n+\stretch(n)}$, where $s_i=S_i(x,k,G\of{s_1},\ldots,G(s_{i-1}))$ for all $i\in[\numquery(n)]$.
	We say that the construction is \emph{non-adaptive} if every $S_i$ depends only on $x$ and $k$.
\end{definition}

In our proof, we consider a restricted notion of black-box reductions that have access to a distinguisher with oracle access to a function. Roughly, this class captures reductions where the number of times it can query the adversary is bounded by a polynomial for all sufficiently good adversaries.
\begin{definition}[Query-bounded reductions]\label{def:bounded}
	Let $(F,R)$ be a fully black-box construction of a PRF from an $\stretch(n)$-bit stretch PRG, where $$\prf=\set{f^{(\cdot)}_k:\zo^{\inlen(n)}\to\zo^{\outlen(n)}}_{n\in\NN,k\in\zo^{\keylen(n)}},$$ 
    let $\eps:\NN\to[0,1]$, and let $\numcallsdist:\NN\to\NN$. We say that $R^{G,\adv}$ is \emph{$(\numcallsdist,\eps)$-query-bounded} if for every oracle-aided algorithm $\adv^{\cdot}$ such that
    $$\abs{\ppr{k\from\zo^{\keylen(n)}}{\Break^{f^G_k}(1^n)=1}-\ppr{f\from\funcSpace_{\inlen(n),\outlen(n)}}{\Break^{f}(1^n)=1}}\geq1-\eps(n)$$
	for infinitely many $n$'s, $R^{G,\adv}$ calls $\adv$ at most $\numcallsdist(n)$ times.
\end{definition}
Note that any reduction that is $(d,\eps)$-query-bounded is also $(d,\eps')$-query-bounded for any $\eps'$ such that $\eps'(n)\geq\eps(n)$. Therefore, for our impossibility result, proving it for a \emph{larger} $\eps$ yields a stronger result. 

\section{Lower Bounds for Fully Black-Box Constructions of PRFs From PRGs}\label{sec:fullyBB}

In this section, we prove our main result. We show that there is no fully black-box construction of PRF with input length $\inlen$ from a PRG with $c$ calls to the PRG and using a query-bounded reduction, where $c=o(n/\log n)$ and $c=o(\inlen/\log\inlen)$. 

We are now ready to state and prove our main result.
\begin{theorem}\label{thm:no_fbb}
	Fix polynomials $\stretch:\NN\to\NN$ and $\inlen:\NN\to\NN$ such that $\inlen(n)=\omega(\log n)$. Then for every $c:\NN\to\NN$ such that $c(n)=o\paren{n/\log n}$ and $c(n)=o\paren{\inlen/\log\inlen}$, and for every polynomial $d:\NN\to\NN$, there is no $c(n)$-call non-adaptive fully black-box construction of a weak PRF with input length $\inlen(n)$ and a single-bit output from an $\stretch(n)$-bit stretch PRG, with a $(d,e^{-n})$-query-bounded reduction. 
\end{theorem}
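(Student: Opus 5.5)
We argue by contradiction via the meta-reduction paradigm. Suppose $(\prf,R)$ is a $c$-call non-adaptive construction of a weak one-bit-output PRF with a $(d,e^{-n})$-query-bounded reduction. We instantiate $G$ as a uniformly random function $\zon\to\zo^{n+\stretch}$. We then define two adversaries. The first is an inefficient \emph{ideal} adversary $\adv_\ideal^{G,f}$ that makes unboundedly many queries to $G$ and has distinguishing advantage at least $1-e^{-n}$. The second is a \emph{real} adversary $\adv_\real$ that makes no queries to $G$; instead it is handed the transcript of $R$'s $G$-queries so far. Combining the two, $R^{\adv_\real,G}$ together with the simulation of $\adv_\real$ yields a distinguisher for $G$ with at most $2^{n/4}$ oracle calls and noticeable advantage. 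This contradicts \cref{lem:RO_is_PRG}.

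\textbf{Step 1 (the adversaries).} Both adversaries sample $\numinputquery$ i.i.d.\ uniform inputs, which are distinct with high probability since $\numinputquery\ll 2^{\inlen/2}$. They receive answers $b_1,\dots,b_\numinputquery$ and sample a random threshold $\tau\in[0.8,1]$. For each key $k$ they run the iterative procedure sketched in the overview:
\begin{itemize}
\item While some seed $s$ is used by at least a $1/\poly(n)$ fraction of the current input set $\cI_t$, fix $G(s)$ to the value maximizing agreement, restrict to $\cI_{t+1}$, and repeat. Because the construction is non-adaptive, the seeds $S_i(x,k)$ are known without querying $G$, so this needs no $G$-queries.
\item After $\recSteps(k)<c$ steps, the generalization of \cref{clm:intro_large_ind_set} (max degree at most $(c-t)p$, then a greedy independent set) yields a set $\cI\su\cI_{\recSteps(k)}$ of inputs with pairwise disjoint seed sets.
\end{itemize}
The ideal score of $k$ is its agreement fraction on $\cI$, computed with the true $G$. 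The real score is the expected agreement over a random $G'$ consistent with $R$'s $G$-queries. Each adversary outputs ``PRF'' iff $\max_k \mathrm{score}\ge\tau$. Keys with $\recSteps(k)=c$ have a fixed score independent of $G$, so the two adversaries agree on them exactly. Choosing $\numinputquery=(\poly(n))^c c!$ keeps each $|\cI|$ large. This is $2^{O(c\log n + c\log c)}$, which is at most $2^{\inlen/2}$ by $c=o(\inlen/\log\inlen)$, and $d\cdot\numinputquery^{\beta}\le 2^{n/4}$ by $c=o(n/\log n)$.

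\textbf{Step 2 (the ideal adversary is a good distinguisher).}
\begin{itemize}
\item On $f^G_k$, the true key scores $1$.
\item On a random $f$, a union bound over $2^\keylen$ keys, over the at most $2^{c(n+\stretch)}$ choices of fixed values, and over the possible sets $\cI$ applies. Combined with Hoeffding (\cref{thm:Hoeffding}) on $|\cI|\ge\poly(n)$ independent fair bits, this shows no key reaches $0.8$ except with probability $e^{-n}$. Hence the advantage is at least $1-e^{-n}$.
\end{itemize}
Here we use that fixed values make the score independent of $G$, so entropy counting works. Query-boundedness then guarantees that $R$ calls $\adv_\ideal$ at most $d(n)$ times, with $d$ fixed before $\numinputquery$ is chosen.

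\textbf{Step 3 (main obstacle: indistinguishability of the two adversaries).} We must show that in every one of the $d$ calls, the outputs of $\adv_\ideal$ and $\adv_\real$ differ with probability at most $1/(3d)$, both when $R$'s challenge is $G(s)$ and when it is uniform. For a fixed $k$, the outputs on $\cI$ depend on disjoint sets of random-oracle points. Conditioned on $R$'s queries, these are independent, with the real score equal to their mean. Hoeffding therefore gives closeness within $1/\poly$, and a union bound over keys is needed here, which is why $|\cI|$ must exceed $\keylen$ polynomially. The planted challenge can affect at most one input in $\cI$, shifting the score by $1/|\cI|$. The random threshold converts score closeness $\delta$ into output disagreement probability $O(\delta)$. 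The delicate part is the conditioning: $R$ chooses its answers adaptively based on $G$. I would handle this through lazy sampling, treating all $G$-values not queried by $R$ as fresh at the time the adversary is invoked.

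Summing over the $d$ calls, $R^{\adv_\real}$ and $R^{\adv_\ideal}$ have advantages within $1/3$ of each other. The latter is at least $1/q(\cdot)$, so the efficient-query emulation breaks the random oracle $G$ with $\le 2^{n/4}$ queries and advantage above $2^{-n/4}$. This contradicts \cref{lem:RO_is_PRG}.
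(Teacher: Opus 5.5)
Your plan tracks the paper's proof closely: per-key fixing of frequent seeds, a greedy independent set, ideal versus real score with a random threshold in $[0.8,1]$, Hoeffding plus a union bound over keys, the planted challenge touching at most one input of $\cI$, a hybrid over the $d$ calls, and $\numinputquery=(\poly(n))^c c!$ balanced against $2^{\inlen}$ and $2^{n/4}$. However, the final accounting does not go through as written.

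A per-call disagreement of $1/(3d)$ only shows that, for each challenge distribution, the acceptance probabilities of $R^{\adv_{\ideal},G}$ and $\tilR^G$ differ by up to $1/3$. Their advantages may therefore differ by up to $2/3$. All you know about $R^{\adv_{\ideal},G}$ is that its advantage is at least $1/q(n)$, an inverse polynomial, so nothing follows.

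The per-call error must be $O(1/(d\,q(n)))$. That is, $\delta$ must be of order $1/(d n^{\gamma})$ and $\nu$ of order at least $(\keylen+n)\,d^2 n^{2\gamma}$; the paper takes $\delta=1/(48 d n^{\gamma}\nu^{1/4})$ and $\nu=2(\keylen+n)d^2n^{2\gamma}$. This is legitimate only because $q$ (equivalently $\gamma$), like $d$, is fixed by $(\prf,R)$ before you choose $\numinputquery$: the ideal adversary's advantage $1-e^{-n}\ge 1/n$ pins the guarantee to $1/q(n)$. State this dependency explicitly; your bound $\numinputquery=(\poly(n))^c c!$ survives it.

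You also need an averaging step. Your closeness bounds hold in expectation over $G$, while \cref{lem:RO_is_PRG} concerns all but a $2^{-n/2}$ fraction of $G$'s. Suppose the expected gap in acceptance probability is $\eps$ for each challenge distribution. Markov then gives that, with probability at least $1-6n^{\gamma}\eps$ over $G$, both gaps are at most $1/(3n^{\gamma})$. Hence $\tilR^G$ has advantage at least $1/(3n^{\gamma})>2^{-n/4}$ on a constant fraction of $G$'s, which is the contradiction.

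Three smaller points:
\begin{itemize}
\item In the hybrid, order the intermediate experiments so that at call $t$ all earlier calls were answered by $\adv_{\real}$ and all later ones by $\adv_{\ideal}$. Your lazy-sampling claim, that unqueried $G$-values are fresh when the adversary is invoked, fails if earlier answers came from $\adv_{\ideal}$, since those leak information about them. The later calls are then a common post-processing of $(G,\text{state})$, so they do not add distance.
\item In Step 3 the union bound must range over the at most $2^{c(n+\stretch)}$ fixed-value vectors $\vy$ as well as over keys. Both scores are maxima over $\vy$, and the ideal maximizer depends on $G$. This only adds $c(n+\stretch)$ to the required $\delta^2\nu$.
\item In Step 2 no union bound over sets $\cI$ is needed, since each $\cI$ is determined by $k$ and the inputs. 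A union bound over all subsets of size $|\cI|$ would in fact cost more than Hoeffding provides.
\end{itemize}
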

\begin{proof}

We first give an overview of the proof. Our proof has 4 major steps.
\begin{description}
    \item[Step 1:] We fix $\numinputquery$ inputs $x_1,\ldots,x_\numinputquery$ for a sufficiently large $\numinputquery$. Then, for every key $k$, we identify a large subset of the inputs $\cI_k$ based on repetitions in the seeds generated by the key and the inputs in $\cI_k$. Specifically, we show that either there are $c$ seeds that are common to all inputs in $\cI_k$, or there $t<c$ common seeds and the rest of the seeds are mutually disjoint. Importantly, this step is independent of the PRG, \ie the set $\cI_k$ is the same for every PRG. We formalize the properties of the set $\cI_k$ and show its existence in \cref{lem:recursion}.

    \item[Step 2:] Next, we define an ``ideal adversary'' $\adv_\ideal^{f,G}$ that breaks the PRF by inverting the PRG $G$. This adversary is not optimal. Instead of outputting `PRF' when all input-output pairs (given by the oracle $f$) are consistent \wrt some key $k$, we define a score function based on the number of consistent input-output pairs in the set $\cI_k$, and let $\adv_\ideal^{f,G}$ output `PRF' if and only if the score is larger then a random threshold sampled uniformly at random. We prove, nonetheless, that this ideal adversary breaks the PRF. This is formalized in \cref{clm:ideal_works}.

    \item[Step 3:] We then define a ``real adversary'' whose goal is to simulate the ideal adversary without inverting the PRG. It is given the query-answer pairs generated by the reduction's interaction with the PRG. Roughly, the difference between the real adversary and the ideal adversary is in the definition of the score functions they compute. We show that the real adversary behaves close to the ideal adversary. We formally prove this in \cref{lem:close_scores_general}.

    \item[Step 4:] Finally, we combine the previous steps to argue that the reduction can successfully break the PRG using the real adversary. Since the real adversary does not query the PRG, we conclude the existence of an algorithm for breaking the PRG using a sub-exponential number of queries. Thus, instantiating the PRG with a random function will contradict \cref{lem:RO_is_PRG}, concluding the proof.
\end{description}
	Assume towards contradiction that the claim is false. That is, there exists polynomials $\inlen=\inlen(n)$, $\keylen=\keylen(n)$, and $\stretch=\stretch(n)$, there exists $c=c(n)$ such that $c=\frac{n}{w(n)\log n}$ and $c=\frac{\inlen}{w(n)\log\inlen}$ for some $w(n)=\omega(1)$, there exists a polynomial $d=d(n)$, and there exists a fully black-box construction $(\prf,R)$ such that $R$ is $(d,e^{-n})$-query-bounded. We write
  $$\prf=\set{f^{(\cdot)}_k:\zo^{\inlen}\to\zo}_{n\in\NN,k\in\zo^{\keylen}}$$
	where $f^{(\cdot)}_k$ is given by
	$$f^{G}_k(x)=P\of{x,k,G(S_1(x,k)),G(S_2(x,k)),\ldots,G(S_c(x,k))},$$
	for efficiently computable functions $S_1,S_2,\ldots,S_c:\zo^{\inlen}\times\zo^{\keylen}\to\zon$ and $P:\zo^{\inlen}\times\zo^{\keylen}\times(\zo^{n+\stretch})^c\to\zo$. 
    
	The proof follows the meta-reduction paradigm. We first define an ``ideal'' adversary that breaks the PRF using an \emph{exponential number} of queries to the PRG. We then construct an \emph{inefficient} ``real'' adversary with \emph{no} access to the PRG that is able to simulate the ideal adversary, when provided with the collection of query–answer pairs generated by the reduction’s interaction with the PRG. We then conclude that there exists an algorithm that breaks the security of the PRG using fewer than $2^{n/4}$ queries, by running the reduction and using the real adversary. 
	
	{For the argument to follow through, we have to ensure that the reduction cannot distinguish between the ideal and real adversaries. We are unable to show that the statistical distance between the responses of the adversaries is negligible. Instead, we show that the distance is $1/\poly(n)$, for some polynomial $\poly(n)$ that depends on both the number of calls $\numcallsdist$ made by the reduction to the adversary, and the number of queries $\numinputquery(n)$ made by the adversary (to its oracle function). Thus, $\poly(n)$ must be sufficiently large, implying that $\numinputquery(n)$ must be sufficiently large compared to $\numcallsdist$. Consequently, $\numcallsdist$ must be fixed prior to $\numinputquery(n)$; otherwise, the argument would suffer from a circular dependence among $\numcallsdist$ and $\numinputquery(n)$. This requirement is exactly the reason we assume the reduction to be query-bounded, as it ensures that $\numcallsdist$ is determined independently of the adversary. We note that if one could construct real and ideal adversaries whose response distributions have negligible statistical distance, our assumption would no longer be necessary.}

	We first introduce some notation. For every $t\in\set{0,1,\ldots,c}$, let $\numinputquery_t=\numinputquery_t(n)$ be a parameter (not necessarily a polynomial) to be defined by the analysis below.
	Further, let $\numinputquery=\numinputquery_0$ and fix $\numinputquery$ distinct inputs $x_1,\ldots,x_{\numinputquery}\in\zo^{\inlen}$.\footnote{Note that if we obtain a contradiction for all distinct inputs, then the same holds for random inputs (up to adding the birthday bound).}

\paragraph{Step 1: Identifying frequent seeds.}
    In the first step of the proof, our goal is to identify for every key $k\in\keyDom$ the set $\cI_k$ of inputs with many common seeds.    
	Towards defining $\cI_k$, we use the following claim, which states that for every set of $t<c$ seeds $\cS$, if every PRG seed outside of $\cS$ appears a few times among a subset $\cI_0\su[\numinputquery]$ of the inputs, then for a large fraction of these inputs, the only common seeds belong to $\cS$.
	\begin{claim}\label{clm:large_ind_set}
		Fix a number $p\in[m]$ and a key $k\in\keyDom$. Further fix a set of inputs $\cI_0\su[\numinputquery]$ and a set of seeds $\cS\su\zon$ of size $|\cS|=t\in\set{0,1,\ldots,c-1},$ such that for every $s\in\cS$ and $i\in\cI_0$ there exists $j\in[c]$ such that $s=S_j(x_i,k)$.
		Assume that there is no $s\in\zon\setminus\cS$ and $\cI\su\cI_0$ of size $p$ such that
		$$s\in\bigcap_{i\in\cI}\set{S_j(x_i,k):j\in[c]}.$$
		Then there exists a set $\cI'\su\cI_0$ of size 
		$$|\cI'|\geq \frac{|\cI_0|}{(c-t)(p-1)+1}\geq\frac{ |\cI_0|}{(c-t)p},$$
		such that the sets of PRG queries generated by them (outside of $\cS$) are mutually disjoint, \ie
		$$S_j(x_i,k)\notin\set{S_{j'}(x_{i'},k):j'\in[c],i'\in\cI'\setminus\set{i}}\setminus\cS,$$
		for every $i\in\cI'$ and $j\in[c]$.
	\end{claim}
	\begin{proof}
		Consider the graph $H=(\cI_0,E)$ where $\set{i,i'}\in E$ for $i \neq i'$ if and only if $x_i$ and $x_{i'}$ share a common PRG query outside of $\cS$, \ie
		$$\paren{\set{S_j(x_i,k):j\in[c]}\setminus\cS}\cap\paren{\set{S_j(x_{i'},k):j\in[c]}\setminus\cS}\ne\emptyset.$$
		Observe that the set $\cI'$ we aim to show exists, corresponds to an independent set in $H$. By assumption, every PRG query $s\in\zon\setminus\cS$ appears at most $p-1$ among the set of all seeds in $\set{S_j(x_i,k):i\in\cI_0,j\in[c]}$. Thus, for every vertex $i\in\cI_0$, every $j\in[c]$ such that $S_j(x_i,k)\notin\cS$ contributes 1 to the degree of $i$. Since by assumption there are exactly $t$ indexes $j\in[c]$ such that $S_j(x_i,k)\in\cS$, the degree of each vertex in $H$ is at most $(c-t)(p-1)$. Therefore, by greedily picking vertices for an independent set and removing their neighbors, we conclude that there exists an independent set of vertices of size at least $|\cI_0|/((c-t)(p-1)+1)$.
	\end{proof}

	Recall that to construct the two adversaries, we define two \emph{score functions} (one for each adversary) that aim to compute for how many inputs the reduction returned the correct output (with respect to some key). 
    The two score functions are defined \wrt each key, via a procedure that depends only on the construction of the PRF (in particular, it is independent of the output of the PRG). This is important, since the score function does not require the real adversary to make any PRG queries.

    At a high level, the procedure works in iterations, where in each iteration it looks for a seed $s$ that appears for sufficiently many inputs (\ie for many of the $x_i$'s, at least one of the seeds needed to compute $f^G_k(x_i)$ is $s$). 
    The process continues until either no $s$ appear many times, or until the process finds $c$ such seeds. In the former case, we can apply \cref{clm:large_ind_set} to argue there is a large set of inputs with mutually disjoint set of seeds, while in the latter case, we will later replace all calls to the PRG with fixed values. The following lemma summarizes this.
	
	\begin{lemma}\label{lem:recursion}
		For every key $k\in\keyDom$, there exists a number $\recSteps(k)\in\set{0,1,\ldots,c}$, there exist $\recSteps(k)$ distinct seeds $\vs_k=(s_{k,1},\ldots,s_{k,\recSteps(k)})\in(\zon)^{\recSteps(k)}$, and there exists a set $\lastset_k\su[\numinputquery]$ such that the following holds.
		\begin{enumerate}
			\item If $\recSteps(k)=c$, then $|\lastset_k|=\numinputquery_c$ and $s_{k,t}\in\bigcap_{i\in\lastset_k}\sset{S_j(x_i,k):j\in[c]}$ for every $t\in[c]$.
			
			\item Otherwise, if $\recSteps(k)<c$, then $|\lastset_k|=\frac{\numinputquery_{\recSteps(k)}}{(c-\recSteps(k)+1)\numinputquery_{\recSteps(k)+1}}$, $$S_j(x_i,k)\notin\set{S_{j'}(x_{i'},k):j'\in[c],i'\in\lastset_k\setminus\set{i}},$$
			for every $i\in\lastset_k$ and $j\in[c]$, and there exists a set $\cI\su[\numinputquery]$ of size $|\cI|=\numinputquery_{\recSteps(k)}$ such that $\lastset_k\su\cI$ and $s_{k,t}\in\bigcap_{i\in\cI}\sset{S_j(x_i,k):j\in[c]}$ for every $t\in[\recSteps(k)]$.
		\end{enumerate}
	\end{lemma}
	\begin{proof}
		The proof follows from greedily looking for the next seed $s_{k,t}$ and the (sufficiently large) set of inputs generating it, until we found $c$ such seeds, or it does not exist and we can apply \cref{clm:large_ind_set}. Formally, the lemma follows from considering the following procedure:
		\begin{enumerate}
			\item Let $\cI_{k,0}=[\numinputquery]$ and $t=1$.
			
			\item\label{step:recursion} While $t\leq c$ do the following:
			\begin{itemize}
				\item If there exists a seed $s\in\zon\setminus \set{s_{k,1},\ldots,s_{k,t-1}}$ and $\cI\su\cI_{k,t-1}$ of size $|\cI|=\numinputquery_t$ such that $s\in\bigcap_{i\in\cI}\set{S_j(x_i,k):j\in[c]}$, then let $s_{k,t}$ and $\cI_{k,t}$ denote the lexicographically smallest such seed $s$ and set $\cI$, and increment $t$ by 1.
				
				\item Otherwise, by \cref{clm:large_ind_set}, there exists a set of inputs $\cI\su\cI_{k,t-1}$ of size
				\ifdefined\IsLLNCS
				$$|\cI|=\frac{|\cI_{k,t-1}|}{(c-t)\cdot \numinputquery_t}=\frac{\numinputquery_{t-1}}{(c-t)\cdot \numinputquery_t}$$
				\else 
				$$|\cI|=\frac{|\cI_{k,t-1}|}{(c-t)\cdot \numinputquery_t}=\frac{\numinputquery_{t-1}}{(c-t)\cdot \numinputquery_t}$$
				\fi
				such that
				$$S_j(x_i,k)\notin\set{S_{j'}(x_{i'},k):j'\in[c],i'\in\cI\setminus\set{i}},$$
				for every $i\in\cI$ and $j\in[c]$. Let $\recSteps(k)=t-1$, and let $\lastset_k$ denote lexicographically smallest such set $\cI$, and halt.
			\end{itemize}
			
			\item Let $\recSteps(k)=c$ and $\lastset_k=\cI_{k,c}$.
		\end{enumerate}
	\end{proof}
    It will be convenient to consider the value $\nu=\min_k |\lastset_k|$.

    \paragraph{Step 2: Defining the ideal adversary.}
	We next define the first score function, which we call the \emph{ideal score function}. It depends on the entire PRG oracle and will only be used by the ideal adversary. Given a key and a sequence of possible outputs $\vz\in(\zo)^\numinputquery$,\footnote{The score also implicitly depends on the inputs $x_1,\ldots,x_\numinputquery$ we fixed earlier.} the score is defined by using \cref{lem:recursion} to find the seeds that appear many times. It then replaces the output of $G$ on those seeds with fixed values, and returns the number of inputs in $\lastset_k$ that are consistent with $\vz$.
	
	In what follows, for every pair of sequences $\vs=(s_1,\ldots,s_t)\in(\zon)^t$ and $\vy=(y_1,\ldots,y_t)\in(\zo^{n+\stretch})^t$, define
	$$\prf_{\vs,\vy}=\set{f^{(\cdot)}_{k,\vs,\vy}:\zo^{\inlen}\to\zo}_{n\in\NN,k\in\zo^{\keylen}},$$
	where $f_{k,\vs,\vy}^{(\cdot)}$ is the same as $f^{(\cdot)}_{k}$, 
	except that for oracle queries in $\vs$,
	the oracle answer for the seed $s_i$ is replaced with $y_i$, for every $i\in[t]$.
	
	\begin{definition}[Ideal score function]
		For every key $k\in\keyDom$, every sequence of possible outputs $\vz=(z_1,\ldots,z_\numinputquery)$, and every function $G:\zon\to\zo^{n+\stretch}$, define the ideal score of $k$ to be
		$$\iscore_{k}^G(\vz)=\max_{\vy\in(\zo^{n+\stretch})^{\recSteps(k)}}\of{\abs{\set{i\in\lastset_k:f^G_{k,\vs_k,\vy}(x_i)=z_i}}}.$$
	\end{definition}
	Observe that if $\recSteps(k)=c$, then $\iscore^G_k(\vz)$ is independent of $G$.
	
	Next, we define the ``ideal'' adversary $\Break^{f,G}_{\ideal}$ for the PRF. This adversary queries the PRG on \emph{exponentially many} inputs. We will then show that the reduction can simulate this adversary efficiently, resulting in an efficient distinguisher for the PRG. Roughly, $\Break^{f,G}_{\ideal}$ works as follows. First, it queries the oracle $f$ on $x_1,\ldots,x_{\numinputquery}$. The adversary outputs PRF if and only if there is a key whose ideal score is at least $\alpha\cdot|\lastset_k|$, where $\alpha\in[0.8,1]$ \emph{is sampled uniformly at random}. 
	
	\begin{algorithm}[$\Break_{\ideal}$]~\\
		{Input:} The security parameter $1^n$.\\
		{Oracle access:} A function $f:\zo^{\inlen}\to\zo$ and a PRG $G:\zon\to\zo^{n+\stretch}$.
		
		\begin{enumerate}
			\item Query $G$ on every input in $\zon$.
			\item Query $f$ on $x_1,\ldots,x_\numinputquery$. Let $\vz=(z_1,\ldots,z_\numinputquery)$ denote the respective answers.
			\item Sample $\alpha\in[0.8,1]$ uniformly at random.
			\item Output 1 if and only if there exists $k\in\keyDom$ such that $\iscore^{G}_k(\vz)\geq\alpha\cdot|\lastset_k|$.
		\end{enumerate}
	\end{algorithm}	
	
	We first show that $\Break^{f,G}_{\ideal}$ indeed distinguishes the PRF from a random function.
	
	\begin{claim}\label{clm:ideal_works}
		For all $G:\zon\to\zo^{n+\stretch}$,
		\begin{align*}
			\abs{\ppr{k\from\zo^{\keylen}}{\Break_{\ideal}^{f^G_k,G}(1^n)=1}-\ppr{f\from\funcSpace_\inlen}{\Break_{\ideal}^{f,G}(1^n)=1}}\geq1-e^{\keylen+c(n+\stretch)-\nu/10}.
		\end{align*}
	\end{claim}
	\begin{proof}
		We first show that $\Break_{\ideal}^{f^G_k,G}(1^n)$ always outputs 1. Indeed, observe that
		$$\iscore_{k}^G(\vz)=\max_{\vy\in(\zo^{n+\stretch})^{\recSteps(k)}}\of{\abs{\set{i\in\lastset_k:f^G_{k,\vs_k,\vy}(x_i)=f_k^G(x_i)}}}.$$
		Then taking $y_i=G(s_{k,i})$ for every $i\in[\recSteps(k)]$ results in $\iscore_{k}^G(\vz)=|\lastset_k|$.
		
		To conclude the proof, we next show that for a random function $f$, $\Break_{\ideal}^{f,G}(1^n)$ outputs 1 with negligible probability. For every key $k\in\keyDom$, every $\vy\in(\zo^{n+\stretch})^{\recSteps(k)}$, and every $i\in\lastset_k$, define the random variable $X_{k,\vy,i}$ to be the indicator for the event $f_{k,\vs_k,\vy}^G(x_i)=f(x_i)$. Then
		$$\iscore_{k}^G(\vz)=\max_{\vy\in(\zo^{n+\stretch})^{\recSteps(k)}}\of{\sum_{i\in\lastset_k}X_{k,\vy,i}}.$$
		Since $f$ is a random function, these indicators are independent and have expectation 1/2. Therefore, for every $k\in\keyDom$ and every $\vy\in(\zo^{n+\stretch})^{\recSteps(k)}$, Hoeffding's inequality (see \cref{thm:Hoeffding}) implies that
		\ifdefined\IsLLNCS
		\begin{align*}
			\ppr{f\from\funcSpace_\inlen}{\sum_{i\in\lastset_k}X_{k,\vy,i}>0.8\cdot\abs{\lastset_k}}
			&\leq e^{-2\cdot(0.8-0.5)^2\cdot\abs{\lastset_k}}\\
			&\leq e^{-\abs{\lastset_k}/10}.
		\end{align*}
		\else
		\begin{align*}
			\ppr{f\from\funcSpace_\inlen}{\sum_{i\in\lastset_k}X_{k,\vy,i}>0.8\cdot\abs{\lastset_k}}\leq e^{-2\cdot(0.8-0.5)^2\cdot\abs{\lastset_k}}\leq e^{-\abs{\lastset_k}/10}\leq e^{-\nu/10}.
		\end{align*}
		\fi
		Therefore, by the union bound,
		\ifdefined\IsLLNCS
		\begin{align*}
			&\ppr{f\from\funcSpace_\inlen}{\Break^{f,G}_{\ideal}(1^n)=1}\\
			&\quad\leq\ppr{f\from\funcSpace_\inlen}{\exists k\in\keyDom:\iscore^G_k(f(x_1),\ldots,f(x_\numinputquery))\geq0.8\cdot\abs{\lastset_k}}\\
			&\quad=\ppr{f\from\funcSpace_\inlen}{\exists k\in\keyDom\;\exists\vy\in(\zo^{n+\stretch})^{\recSteps(k)}:\sum_{i\in\lastset_k}X_{k,\vy,i}\geq0.8\cdot\abs{\lastset_k}}\\
			&\quad\leq 2^{\keylen+c(n+\stretch)}\cdot e^{-\nu/10}\\
			&\quad\leq e^{\keylen+c(n+\stretch)-\nu/10}.
		\end{align*}
		\else
		\begin{align*}
			\ppr{f\from\funcSpace_\inlen}{\Break^{f,G}_{\ideal}(1^n)=1}
			&\leq\ppr{f\from\funcSpace_\inlen}{\exists k\in\keyDom:\iscore^G_k(f(x_1),\ldots,f(x_\numinputquery))\geq0.8\cdot\abs{\lastset_k}}\\
			&=\ppr{f\from\funcSpace_\inlen}{\exists k\in\keyDom\;\exists\vy\in(\zo^{n+\stretch})^{\recSteps(k)}:\sum_{i\in\lastset_k}X_{k,\vy,i}\geq0.8\cdot\abs{\lastset_k}}\\
			&\leq 2^{\keylen+c(n+\stretch)}\cdot e^{-\nu/10}\\
			&\leq e^{\keylen+c(n+\stretch)-\nu/10}.
		\end{align*}
		\fi
	\end{proof}
	Recall that for every $k\in\keyDom$, $|\lastset_k|=\numinputquery_c$ or $|\lastset_k|=\frac{\numinputquery_{t-1}}{(c-t+1)\numinputquery_t}$ for some $t\in[c]$. Since $\nu\geq|\lastset_k|$ for every $k\in\keylen$, taking $\numinputquery_c\geq 10(\keylen+c(n+\stretch)+n)$ and $\numinputquery_{t-1}\geq (c-t+1)\numinputquery_t\cdot 10(\keylen+c(n+\stretch)+n)$ for all $t\in[c]$, results in the ideal adversary's advantage begin at least $1-e^{-n}$.

    \paragraph{Step 3: Defining the real adversary.}
	We now define the \emph{real score}, which will be used by the real adversary. It is defined \wrt a set of query-answer pairs $\cG$ instead of a PRG. The set $\cG$ will later be instantiated \wrt the queries the reduction made to the PRG.
    Similarly to the ideal score, the computation of the real score also starts by replacing calls to $G$ on seeds that appear many times with fixed inputs. The only difference is the end of the recursive process. There are two cases: either every call to $G$ was replaced, or the inputs in $\lastset_k$ have mutually disjoint seeds. In the former case, the real score is identical to the ideal score. In the latter case, the score is defined as follows. Every input $x_i$ for which every remaining seed (\ie for which the output of $G$ was not fixed) belongs to the set $\cG$ can be computed using $G$. Such inputs contribute the same value as in the ideal score. For other inputs, the real score estimates its contribution to the ideal score by computing the probability over the PRG that the answer is correct. Since this is the case where all remaining seeds are mutually disjoint, the outputs corresponding to these inputs are independent random variables, and we can estimate their contribution to the score with high precision (see \cref{lem:close_scores_general} below).

    In the following, for a set of PRG query-answer pairs $\cG=\sset{(s^\cG_1,y^\cG_1),\ldots,(s^\cG_q,y^\cG_q)}\su\zon\times\zo^{n+\stretch}$ we let $\vs^\cG$ denote the sequence of queries in $\cG$ that \emph{do not appear in $\vs_k$}, and let $\vy^\cG$ denote the corresponding sequence of outputs. The reason not to include the queries in $\vs_k$ is that any query $s$ made by the reduction is unknown to the ideal adversary, which maximizes its score over all possible values of $G(s)$. Thus, the real adversary does the same, obtaining consistency between the two scores.
	\begin{definition}[Real score function]
		For every key $k\in\keyDom$, every sequence of possible outputs $\vz=(z_1,\ldots,z_\numinputquery)$, and every set of PRG query-answer pairs $\cG=\sset{(s_1^\cG,y_1^\cG),\ldots,(s_q^\cG,y_q^\cG)}\in\zon\times\zo^{n+\stretch}$, define $\rscore_{k}^\cG(\vz)$ as follows. 

		\begin{itemize}
			\item If $\recSteps(k)=c$, then 
			let
			$$\rscore_{k}^\cG(\vz)=\max_{\vy\in(\zo^{n+\stretch})^{c}}\of{\abs{\set{i\in\lastset_k:P\of{x_i,k,y'_1,\ldots,y'_c}=z_i}}},$$
			where for every $j\in[c]$, $y'_j=y_{j'}$ for the unique $j'\in[c]$ such that $S_j(x_i,k)=s_{k,j'}$.
			
			\item 
            If $\recSteps(k)<c$, then let
			\begin{align*}
				\rscore_{k}^\cG(\vz)				=\max_{\vy\in(\zo^{n+\stretch})^{\recSteps(k)}}\of{\sum_{i\in\lastset_k}\ppr{G'\from\funcSpace_{n,n+\stretch}}{f^{G'}_{k,(\vs_k\conc \vs^\cG),(\vy\conc\vy^\cG)}(x_i)=z_i}}.
			\end{align*}
		\end{itemize}
	\end{definition}
    
	Similarly to the ideal score, if $\recSteps(k)=c$ then $\rscore_{k}^\cG(\vz)$ is independent of $\cG$.
    In particular, in such a case, $\rscore_{k}^\cG\of{\vz}=\iscore_{k}^G\of{\vz}$.

	Since every ideal adversary we consider has advantage at least $1-e^{-n}$, it follows that there exists a constant $\gamma\in\NN$ such that for any $G\in\funcSpace_{n,n+\stretch}$,
	\begin{align}\label{eq:R_dist}
		\abs{\ppr{s\from\zon}{R^{\Break_{\ideal},G}(1^n,G(s))=1}-\ppr{y\from\zo^{n+\stretch}}{R^{\Break_{\ideal},G}(1^n,y)=1}}\geq\frac{1}{n^\gamma}
	\end{align}
	for infinitely many $n$'s. To obtain a contradiction, we construct an algorithm $\tilR^{(\cdot)}$ that successfully breaks a \emph{randomly chosen} PRG with noticeable probability, using at most $2^{n/4}$ queries to $G$ and \emph{without} having access to $\Break_\ideal$. Specifically, we define $\tilR$ such that
	\ifdefined\IsLLNCS
	\begin{align}\label{eq:tilR_dist}
		\Pr_{G\from\funcSpace_{n,n+\stretch}}
		\left[\Adv\of{\tilR^G}\leq\frac{1}{3n^{\gamma}}\right]\leq\frac34,
	\end{align}
	\else
	\begin{align}\label{eq:tilR_dist}
		\ppr{G\from\funcSpace_{n,n+\stretch}}{\abs{\ppr{s\from\zon}{\tilR^{G}(1^n,G(s))=1}-\ppr{y\from\zo^{n+\stretch}}{\tilR^{G}(1^n,y)=1}}\leq\frac{1}{3n^{\gamma}}}\leq\frac34,
	\end{align}
	\fi
	 \ifdefined\IsLLNCS for infinitely many $n$'s, where 
	 $$\Adv\of{\tilR^G}:=\abs{\ppr{s\from\zon}{\tilR^{G}(1^n,G(s))=1}-\ppr{y\from\zo^{n+\stretch}}{\tilR^{G}(1^n,y)=1}},$$
	 is the advantage of $\tilR^G$.
	 \else for infinitely many $n$'s. \fi Note that \cref{eq:tilR_dist} contradicts \cref{lem:RO_is_PRG}.
	
	Before defining $\tilR$, we first introduce an adversary $\adv_{\real}$ that, intuitively, efficiently emulates $\Break_{\ideal}$ using the set of queries to $G$ used so far by the reduction. Roughly, $\adv_{\real}$ works similarly to $\Break_{\ideal}$, but uses $\rscore$ instead of $\iscore$. We then show by letting $\tilR$ run $R$, and replacing each call to $\Break_\ideal$ with running $\adv_{\real}$ using the set of query-answer pairs of the PRG used so far, the answers of the two distinguishing oracles are close in statistical distance.
	
	\begin{algorithm}[$\adv_{\real}$]~\\
		{Input:} The security parameter $1^n$, a sequence $\vz\in\zo^{\numinputquery}$, and a set of query-answer pairs $\cG\su\zon\times\zo^{n+\stretch}$.
		
		\begin{enumerate}
			\item Sample $\alpha\in[0.8,1]$ uniformly at random.
			\item Output 1 if and only if there exists $k\in\keyDom$ such that $\rscore^\cG_k(\vz)\geq\alpha\cdot|\lastset_k|$.
		\end{enumerate}
	\end{algorithm}
	
	We are now ready to define $\tilR$. The algorithm simply runs $R$, replacing all calls to $\Break_{\ideal}$ by running $\adv_{\real}$.
	
	\begin{algorithm}[$\tilR$]~\\
		{Input:} The security parameter $1^n$ and a challenge $\chall\in\zo^{n+\stretch}$.\\
		{Oracle access:} A PRG $G:\zon\to\zo^{n+\stretch}$.
		
		\begin{enumerate}
			\item Simulate $R^{\Break,G}$ as follows.
			\begin{itemize}
				\item Whenever $R$ calls $G$, query $G$ with the same input.
				
				\item Whenever $R$ calls $\Break_{\ideal}$, send it $x_1,\ldots,x_\numinputquery$ (if $R$ queries $G$ during this call then query it as well). Let $\vz=(z_1,\ldots,z_\numinputquery)\in\zo^\numinputquery$ denote the respective answers of $R$. Call $\adv_{\real}(1^n,\vz,\cG)$, where $\cG$ is the current set of query-answer pairs \wrt $G$, and send to $R$ the same answer.
			\end{itemize}
			
			\item Output whatever $R$ outputs.
		\end{enumerate}
	\end{algorithm}
	
	We next show that $\tilR$ and $R$ output the same value with high probability. {We do so by showing that the statistical distance between the states of $R$ and $\tilR$ after every call to the adversary is small. Towards proving this, we first show that for every possible sequence of responses $\vz$ made by the reduction, the two scores computed by the real and ideal adversaries at any given call are somewhat close except with a negligible probability, where the probability is taken over $G$ and $\chall$ (the randomness of the reduction is fixed).} In the following, for a set $\cG\su\zon\times\zo^{n+\stretch}$ we let $\funcSpace_{n,n+\stretch}^\cG$ denote the set of all functions that are consistent with $\cG$. 
	
	\begin{lemma}\label{lem:close_scores_general}
		Consider a call made by the reduction to the distinguishing oracle, and fix the query-answer pairs $\cG\su\zon\times\zo^{n+\stretch}$ made so far by the reduction. Let $Y$ denote the distribution over $\zo^{n+\stretch}$ for the challenge given to the reduction (\ie it is either uniform over $\zo^{n+\stretch}$ or is $G(S)$ for $S$ uniform over $\zon$), and let $\vz\in(\zo)^{\numinputquery}$ denote the reduction's answers to the distinguishing oracle on inputs $x_1,\ldots,x_\numinputquery$. Then for every $\delta\geq2/\nu$,

		\begin{align}	
        \label{eq:close_scores0}
        \ppr{G\from\funcSpace_{n,n+\stretch}^\cG,\chall\from Y}{\exists k\in\keyDom:\abs{\iscore^{G}_k(\vz)-\rscore^\cG_{k}(\vz)}\geq \delta\cdot\abs{\lastset_k}}\leq e^{\keylen-\frac{\delta^2}{2}\cdot\nu}. 
        \end{align}
	\end{lemma}
	\begin{proof}
		By the union bound, it suffices to show that 
		\begin{align}\label{eq:close_scores}
			\ppr{G\from\funcSpace_{n,n+\stretch}^\cG,\chall\from Y}{\abs{\iscore^{G}_k(\vz)-\rscore_{k}^\cG(\vz)}\geq \delta\cdot|\lastset_k|}\leq e^{-\frac{\delta^2}{2}\cdot\nu} 
		\end{align}
		for every key $k\in\keyDom$.    
        
        As noted before, if $\recSteps(k)=c$, then the two scores are identical and the difference is 0. In the remainder of the proof, we assume that $\recSteps(k)<c$.
        
        For every $\vy\in(\zo^{n+\stretch})^{\recSteps(k)}$, let
        $$\iscore_{k}^G(\vz,\vy)=\abs{\set{i\in\lastset_k:f^G_{k,\vs_k,\vy}(x_i)=z_i}}$$
        and let
        $$\rscore_{k}^\cG(\vz,\vy)=\sum_{i\in\lastset_k}\ppr{G'\from\funcSpace_{n,n+\stretch}}{f^{G'}_{k,(\vs_k\conc \vs^\cG),(\vy\conc\vy^\cG)}(x_i)=z_i},$$
        where $\vs^\cG$ is the sequence of seeds in $\cG$ that do not appear in $\vs_k$, and $\vy^\cG$ is the sequence of corresponding outputs.
        
		We show that for any sequence $\vy\in(\zo^{n+\stretch})^{\recSteps(k)}$, 
		\begin{align}
        \label{eq:close_scores2}
        \ppr{G\from\funcSpace_{n,n+\stretch}^\cG,\chall\from Y}{\abs{\iscore^{G}_k(\vz,\vy)-\rscore_{k}^\cG(\vz,\vy)}\geq \delta\cdot|\lastset_k|}\leq e^{-\frac{\delta^2}{2}\cdot\nu}. 
        \end{align}
		In particular, this holds for the $\vy$ that maximizes the scores, which implies \cref{eq:close_scores}.
        Note that the string $\vy$ that maximizes the scores for the real and ideal adversaries could be different, but~\cref{eq:close_scores2} implies~\cref{eq:close_scores} nevertheless.

        Fixing $k$ and $\vy\in(\zo^{n+\stretch})^{\recSteps(k)}$,
        we simplify notation by writing $\iscore$ and $\rscore$ for the ideal and real scores, respectively. 
		We next show that 
        \begin{align}\label{eq:exp_scores_general}            \rscore\leq\eex{G\from\funcSpace_{n,n+\stretch}^\cG,\chall\from Y}{\iscore}\leq\rscore+1.
		\end{align}
		By linearity of expectation, \ifdefined\IsLLNCS it follows that \else\fi the contribution of every $i\in\lastset_k$ to $\seex{G}{\iscore}$ is 
		$$\ppr{G\from\funcSpace_{n,n+\stretch}^\cG,\chall\from Y}{f^{G}_{k,\vs_k,\vy}(x_i)=z_i}$$
		(where $z_i$ may depend on $\chall$ and $G$). 

        If $Y$ is the uniform distribution over $\zo^{n+\stretch}$, then 
        $$\ppr{G\from\funcSpace_{n,n+\stretch}^\cG,\chall\from Y}{f^{G}_{k,\vs_k,\vy}(x_i)=z_i}=\ppr{G'\from\funcSpace_{n,n+\stretch}^\cG,\chall\from Y}{f^{G'}_{k,(\vs_k\conc\vs^\cG),(\vy\conc\vy^\cG)}(x_i)=z_i}$$
        for every $i\in\lastset_k$, hence $\seex{G,\chall}{\iscore}=\rscore$. Otherwise, if the distribution of $Y$ is $G(S)$ where $S$ is uniform over $\zon$, then every $i\in\lastset_k$ such that $z_i=P(x_i,k,\vy')$, where $Y$ appears in $\vy'$, contributes 1 to $\seex{G,\chall}{\iscore}$. Since the sets $\sset{S_j(x_i,k):j\in[c]}$ are pairwise disjoint, there is at most 1 such $i$. On the other hand, for the real score, each $i$ contributes a value between 0 and 1, thus $\seex{G,\chall}{\iscore}\leq\rscore+1$.

		For every $i\in\lastset_k$ let $X_i$ denote the indicator for the event $f^G_{k,\vs_k,\vy}(x_i)=z_i$ (over sampling $G\from\funcSpace_{n,n+\stretch}^\cG$ and $\chall\from Y$). Then $\iscore=\sum_{i\in\lastset_k}X_i$. These random variables are independent since the sets $\sset{S_j(x_i,k):j\in[c]}$ are pairwise disjoint. Since $\delta\geq2/\nu\geq2/|\lastset_k|$, by Hoeffding's inequality, it follows that
		\ifdefined\IsLLNCS
		\begin{align*}
			&\ppr{G\from\funcSpace^\cG_{n,n+\stretch},\chall\from Y}{\abs{\iscore-\rscore}\geq \delta\cdot\abs{\lastset_k}}\\
			&\quad\leq\ppr{G\from\funcSpace^\cG_{n,n+\stretch},\chall\from Y}{\abs{\iscore-\ex{\iscore}}\geq \delta\cdot\abs{\lastset_k}-1}\\
			&\quad\leq\ppr{G\from\funcSpace^\cG_{n,n+\stretch},\chall\from Y}{\abs{\iscore-\ex{\iscore}}\geq \frac{\delta}{2}\cdot\abs{\lastset_k}}\\
			&\quad\leq e^{-\frac{2\paren{\frac{\delta}{2}\cdot|\lastset_k|}^2}{\abs{\lastset_k}}}\\
			&\quad= e^{-\frac{\delta^2}{2}\cdot\abs{\lastset_k}}.
		\end{align*}
		\else
		\begin{align*}
			\ppr{G\from\funcSpace^\cG_{n,n+\stretch},\chall\from Y}{\abs{\iscore-\rscore}\geq \delta\cdot\abs{\lastset_k}}
			&\leq\ppr{G\from\funcSpace^\cG_{n,n+\stretch},\chall\from Y}{\abs{\iscore-\ex{\iscore}}\geq \delta\cdot\abs{\lastset_k}-1}\\
			&\leq\ppr{G\from\funcSpace^\cG_{n,n+\stretch},\chall\from Y}{\abs{\iscore-\ex{\iscore}}\geq \frac{\delta}{2}\cdot\abs{\lastset_k}}\\
			&\leq e^{-\frac{2\paren{\frac{\delta}{2}\cdot|\lastset_k|}^2}{\abs{\lastset_k}}}\\
			&=e^{-\frac{\delta^2}{2}\cdot\abs{\lastset_k}}\\
			&\leq e^{-\frac{\delta^2}{2}\cdot\nu}.
		\end{align*}
		\fi
	\end{proof}

	We now show that for any call by the reduction to the distinguishing oracle, the statistical distance (taken over $G$ and the challenge $\chall$) between the answers of $\Break_{\ideal}$ and $\adv_{\real}$ is small.
	\begin{lemma}\label{lem:single_call_general}
		Fix a call to the distinguishing oracle. Let $\cG\su\zon\times\zo^{n+\stretch}$ denote the set of query-answer pairs made by $R$ to $G$, let $Y$ denote the distribution over the challenge, and let $\vz=(z_1,\ldots,z_\numinputquery)\in\zo^{\numinputquery}$ denote the answers of $R$ when receiving inputs $x_1,\ldots,x_\numinputquery$ from the distinguishing oracle. Finally, for every $G\in\funcSpace_{n,n+\stretch}^\cG$ and $\chall\in\zo^{n+\stretch}$ let $\iprob^{G,\chall}$ denote the probability that $\Break_{\ideal}(1^n)$ outputs 1 when given $\vz$ as the oracle answers, and let $\rprob^{\cG,\chall}$ denote the probability that $\adv_{\real}(1^n,\vz,\cG)$ outputs 1. Then, for every $\delta\geq2/\nu$ 
		$$\SD_{G\from\funcSpace_{n,n+\stretch}^\cG,\chall\from Y}\of{\iprob^{G,\chall},\rprob^{\cG,\chall}}\leq e^{\keylen-\frac{\delta^2}{2}\cdot\nu}+5\delta.$$
	\end{lemma}
	\begin{proof}
		For every $G\in\funcSpace_{n,n+\stretch}^{\cG}$ and every $\chall\in\zo^{n+\stretch}$ let 
		$$\mu^{G,\chall}=\max_{k\in\keyDom}\frac{\abs{\iscore^{G}_k(\vz)-\rscore^\cG_{k}(\vz)}}{\abs{\lastset_k}}.$$
		Additionally, for every $\alpha\in[0.8,1]$ let 
		$$
		\iprob^{G,\chall}(\alpha)=
		\begin{cases}
			1 & \text{if }\exists k\in\keyDom: \iscore^G_k(\vz)\geq\alpha|\lastset_k|\\
			0 & \text{otherwise},
		\end{cases}
		$$
		and let
		$$
		\rprob^{\cG,\chall}(\alpha)=
		\begin{cases}
			1 & \text{if }\exists k\in\keyDom: \rscore^\cG_k(\vz)\geq\alpha|\lastset_k|\\
			0 & \text{otherwise}.
		\end{cases}
		$$
		Then by Jensen's inequality,
		\begin{align*}
			\abs{\eex{\alpha\from[0.8,1]}{\iprob^{G,\chall}(\alpha)-\rprob^{\cG,\chall}(\alpha)}}
			\leq\eex{\alpha\from[0.8,1]}{\abs{\iprob^{G,\chall}(\alpha)-\rprob^{\cG,\chall}(\alpha)}}.
		\end{align*}
		Note that any $\alpha$ contributes to the expectation on the right-hand side if and only if $\alpha$ is between $\iscore_{k}^G(\vz)/\sabs{\lastset_k}$ and $\rscore_{k'}^\cG(\vz)/\sabs{\cI_{k',\recSteps'(k')}}$ for (possibly distinct) $k,k'\in\keyDom$. Observe that this distance is at most $\mu^{G,\chall}$ (\ie the distance \wrt the same key). Therefore,
        \ifdefined\IsLLNCS
        \begin{align*}
			&\eex{\alpha\from[0.8,1]}{\abs{\iprob^{G,\chall}(\alpha)-\rprob^{\cG,\chall}(\alpha)}}\\
			&\quad\leq\ppr{\alpha\from[0.8,1]}{\exists k\in\keyDom: \frac{\min\set{\iscore^{G}_k(\vz),\rscore^\cG_{k}(\vz)}}{\abs{\lastset_k}}\leq\alpha\leq\frac{\max\set{\iscore^{G}_k(\vz),\rscore^\cG_{k}(\vz)}}{\abs{\lastset_k}}}\\
			&\quad\leq\frac{\mu^{G,\chall}}{1-0.8}\\
			&\quad=5\mu^{G,\chall}.
		\end{align*}
        \else
		\begin{align*}
			&\eex{\alpha\from[0.8,1]}{\abs{\iprob^{G,\chall}(\alpha)-\rprob^{\cG,\chall}(\alpha)}}\\
			&\quad\leq\ppr{\alpha\from[0.8,1]}{\exists k\in\keyDom: \min\set{\iscore^{G}_k(\vz),\rscore^\cG_{k}(\vz)}\leq\alpha\cdot\abs{\lastset_k}\leq\max\set{\iscore^{G}_k(\vz),\rscore^\cG_{k}(\vz)}}\\
			&\quad\leq\frac{\mu^{G,\chall}}{1-0.8}\\
			&\quad=5\mu^{G,\chall}.
		\end{align*}
        \fi
		Hence, by \cref{lem:close_scores_general}, 
		\begin{align*}
			&\ppr{G\from\funcSpace_{n,n+\stretch}^\cG,\chall\from Y}{\abs{\eex{\alpha}{\iprob^{G,\chall}(\alpha)-\rprob^{\cG,\chall}(\alpha)}}\geq 
			5\delta}\\
			&\quad\leq\ppr{G\from\funcSpace_{n,n+\stretch}^\cG,\chall\from Y}{\exists k\in\keyDom:\abs{\iscore^{G}_k(\vz)-\rscore^\cG_{k}(\vz)}\geq 
			\delta\cdot|\lastset_k|}\\
			&\quad\leq e^{\keylen-\frac{\delta^2}{2}\cdot\nu}.
		\end{align*}
		Therefore, the statistical distance is bounded by
		\ifdefined\IsLLNCS
		\begin{align*}
			&\SD_{G\from\funcSpace_{n,n+\stretch}^\cG,\chall\from Y}\of{\iprob^{G\chall},\rprob^{\cG,\chall}}\\
			&\quad\leq\eex{G\from\funcSpace_{n,n+\stretch}^\cG,\chall\from Y}{\abs{\iprob^{G,\chall}-\rprob^{\cG,\chall}}}\\
			&\quad=\eex{G\from\funcSpace_{n,n+\stretch}^\cG,\chall\from Y}{\abs{\eex{\alpha}{\iprob^{G,\chall}(\alpha)-\rprob^{\cG,\chall}(\alpha)}}}\\
			&\quad\leq\ppr{G\from\funcSpace_{n,n+\stretch},\chall\from Y}{\abs{\eex{\alpha}{\iprob^{G,\chall}(\alpha)-\rprob^{\cG,\chall}(\alpha)}}\geq
			5\delta}+5\delta\\
			&\quad\leq e^{\keylen-\frac{\delta^2}{2}\cdot\nu}+5\delta.
		\end{align*}
		\else
		\begin{align*}
			\SD_{G\from\funcSpace_{n,n+\stretch}^\cG,\chall\from Y}\of{\iprob^{G,\chall},\rprob^{\cG,\chall}}
			&\leq\eex{G\from\funcSpace_{n,n+\stretch}^\cG,\chall\from Y}{\abs{\iprob^{G,\chall}-\rprob^{\cG,\chall}}}\\
			&=\eex{G\from\funcSpace_{n,n+\stretch}^\cG,\chall\from Y}{\abs{\eex{\alpha}{\iprob^{G,\chall}(\alpha)-\rprob^{\cG,\chall}(\alpha)}}}\\
			&\leq\ppr{G\from\funcSpace_{n,n+\stretch}^\cG,\chall\from Y}{\abs{\eex{\alpha}{\iprob^{G,\chall}(\alpha)-\rprob^{\cG,\chall}(\alpha)}}\geq
			5\delta}+5\delta\\
			&\leq e^{\keylen-\frac{\delta^2}{2}\cdot\nu}+5\delta.
		\end{align*}
		\fi
	\end{proof}

    For every $t\in\sset{0,1,\ldots,\numcallsdist}$, let $\st_t$ and $\widetilde{\st}_t$ denote the states of $R$ and $\tilR$, respectively, after the \tth call to the distinguishing oracle, where the state consists of, the challenge $\chall$, the (current) query-answer pairs $\cG$ made to the PRG, and the $t$ answers of the distinguishing oracle.\footnote{Formally, it should also depend on the randomness. Since all our claims hold for every randomness, we ignore it for simplicity.} Then by \cref{lem:single_call_general} and the triangle inequality, for every $t\in[\numcallsdist]$, after the \tth call to the distinguishing oracle, 
	$$\SD\of{\st_t,\widetilde{\st}_t}\leq\SD\of{\st_{t-1},\widetilde{\st}_{t-1}}+e^{\keylen-\frac{\delta^2}{2}\cdot\nu}+5\delta\leq t\cdot e^{\keylen-\frac{\delta^2}{2}\cdot\nu}+5t\delta,$$
	{where the statistical distances are taken over $G\from\funcSpace_{n,n+\stretch}^\cG$, $\chall\from Y$ where $Y$ is the distribution of the challenge, and over the reductions' randomness}. Thus,
	\begin{align}\label{eq:out_same}
		\pr{\tilR^G(\chall)\ne R^{\Break_{\ideal},G}(\chall)}\leq \numcallsdist\cdot e^{\keylen-\frac{\delta^2}{2}\cdot\nu}+5\numcallsdist\delta,
	\end{align}
	{where the probability is taken over $G\from\funcSpace_{n,n+\stretch}^\cG$, $\chall\from Y$ where $Y$ is the distribution of the challenge, and over the reductions' randomness}.

    \paragraph{Step 4: Putting it all together.} 
    Our goal is to upper-bound
	\ifdefined\IsLLNCS
	\begin{align*}
		\Delta:=\Pr_{G\from\funcSpace_{n,n+\stretch}}
		\left[
		\begin{aligned}
			&\left|\ppr{s\from\zon}{\tilR^{G}(1^n,G(s))=1}\right.\\
			&\qquad\left.-\ppr{y\from\zo^{n+\stretch}}{\tilR^{G}(1^n,y)=1}\right|\leq\frac{1}{3n^{\gamma}}
		\end{aligned}
		\right]
	\end{align*}
	\else
	$$\Delta:=\ppr{G\from\funcSpace_{n,n+\stretch}}{\abs{\ppr{s\from\zon}{\tilR^{G}(1^n,G(s))=1}-\ppr{y\from\zo^{n+\stretch}}{\tilR^{G}(1^n,y)=1}}\leq\frac{1}{3n^{\gamma}}}.$$
	\fi
    
	Towards bounding $\Delta$, we first upper-bound the expected difference between the answers of $R$ and $\tilR$ (where the expectation is taken over $G$). To alleviate notations, throughout the rest of the proof we remove $\Break_{\ideal}$ from the notations and write $R^G$ instead. {Additionally, all probabilities are taken over the reductions' randomness and we ignore this for brevity.}
	\begin{claim}\label{clm:exp_diff}
		Let $Y$ denote the distribution over the challenge. Then for all sufficiently large $n$'s,
		$$\eex{G\from\funcSpace_{n,n+\stretch}}{\abs{\ppr{\chall\from Y}{\tilR^G(\chall)=1}-\ppr{\chall\from Y}{R^G(\chall)=1}}}\leq\numcallsdist\cdot e^{\keylen-\frac{\delta^2}{2}\cdot\nu}+5\numcallsdist\delta.$$
	\end{claim}
	\begin{proof}
		For every $G\in\funcSpace_{n,n+\stretch}$, observe that
		$$\ppr{\chall\from Y}{\tilR^G(\chall)=1}\leq\ppr{\chall\from Y}{R^G(\chall)=1}+\ppr{\chall\from Y}{\tilR^G(\chall)\ne R^G(\chall)}$$
		and that
		$$\ppr{\chall\from Y}{\tilR^G(\chall)=1}\geq\ppr{\chall\from Y}{\tilR^G(\chall)=R^G(\chall)}\cdot\ppr{\chall\from Y}{R^G(\chall)=1}.$$
		Let $\widetilde{\funcSpace}$ denote the set of all functions in $\funcSpace_{n,n+\stretch}$ such that
		$$\ppr{\chall\from Y}{\tilR^G(\chall)=1}\geq\ppr{\chall\from Y}{R^G(\chall)=1}.$$
		Then
		\begin{align*}
			&\eex{G\from\funcSpace_{n,n+\stretch}}{\abs{\ppr{\chall\from Y}{\tilR^G(\chall)=1}-\ppr{\chall\from Y}{R^G(\chall)=1}}}\\
			&\quad\leq\frac{1}{2^{2^n(n+\stretch)}}\cdot\sum_{G\in\widetilde{\funcSpace}}\paren{\ppr{\chall\from Y}{\tilR^G(\chall)=1}-\ppr{\chall\from Y}{R^G(\chall)=1}}\\
			&\qquad+\frac{1}{2^{2^n(n+\stretch)}}\cdot\sum_{G\in\funcSpace_{n,n+\stretch}\setminus\widetilde{\funcSpace}}\paren{\ppr{\chall\from Y}{R^G(\chall)=1}-\ppr{\chall\from Y}{\tilR^G(\chall)=1}}\\
			&\quad\leq\frac{1}{2^{2^n(n+\stretch)}}\cdot\sum_{G\in\widetilde{\funcSpace}}\ppr{\chall\from Y}{\tilR^G(\chall)\ne R^G(\chall)}\\
			&\qquad+\frac{1}{2^{2^n(n+\stretch)}}\cdot\sum_{G\in\funcSpace_{n,n+\stretch}\setminus\widetilde{\funcSpace}}\ppr{\chall\from Y}{R^G(\chall)=1}\cdot\ppr{\chall\from Y}{\tilR^G(\chall)\ne R^G(\chall)}\\
			&\quad\leq\eex{G\from\funcSpace_{n,n+\stretch}}{\ppr{\chall\from Y}{\tilR^G(\chall)\ne R^G(\chall)}}\\
			&\quad\leq\numcallsdist\cdot e^{\keylen-\frac{\delta^2}{2}\cdot\nu}+5\numcallsdist\delta,
		\end{align*}
		where the last inequality follows from \cref{eq:out_same}.
	\end{proof}

	Now, observe that
	\ifdefined\IsLLNCS
	\begin{align*}
		\Delta\leq
		&\Pr_{G\from\funcSpace_{n,n+\stretch}}
		\left[
		\begin{aligned}
			&\left|\ppr{s\from\zon}{R^{G}(1^n,G(s))=1}\right.\\
			&\qquad\left.-\ppr{y\from\zo^{n+\stretch}}{R^{G}(1^n,y)=1}\right|<\frac{1}{n^{\gamma}}
		\end{aligned}
		\right]\\
		&+\Pr_{G\from\funcSpace_{n,n+\stretch}}
		\left[
		\begin{aligned}
			&\left|\ppr{s\from\zon}{R^{G}(1^n,G(s))=1}\right.\\
			&\qquad\left.-\ppr{s\from\zon}{\tilR^{G}(1^n,G(s))=1}\right|>\frac{1}{3n^{\gamma}}
		\end{aligned}
		\right]\\
		&+\Pr_{G\from\funcSpace_{n,n+\stretch}}
		\left[
		\begin{aligned}
			&\left|\ppr{y\from\zo^{n+\stretch}}{R^{G}(1^n,y)=1}\right.\\
			&\qquad\left.-\ppr{y\from\zo^{n+\stretch}}{\tilR^{G}(1^n,y)=1}\right|>\frac{1}{3n^{\gamma}}
		\end{aligned}
		\right].
	\end{align*}
	\else
	\begin{align*}
		\Delta\leq
		&\ppr{G\from\funcSpace_{n,n+\stretch}}{\abs{\ppr{s\from\zon}{R^{G}(1^n,G(s))=1}-\ppr{y\from\zo^{n+\stretch}}{R^{G}(1^n,y)=1}}<\frac{1}{n^{\gamma}}}\\
		&+\ppr{G\from\funcSpace_{n,n+\stretch}}{\abs{
				\ppr{s\from\zon}{R^{G}(1^n,G(s))=1}-
				\ppr{s\from\zon}{\tilR^{G}(1^n,G(s))=1}
			}> \frac{1}{3n^{\gamma}}}\\
		&+\ppr{G\from\funcSpace_{n,n+\stretch}}{\abs{
				\ppr{y\from\zo^{n+\stretch}}{R^{G}(1^n,y)=1}-
				\ppr{y\from\zo^{n+\stretch}}{\tilR^{G}(1^n,y)=1}} > \frac{1}{3n^{\gamma}}}.
	\end{align*}
	\fi
	Note that by \cref{eq:R_dist}, the first term is 0 for infinitely many $n$'s. As for the other two terms, by Markov's inequality and \cref{clm:exp_diff}, it follows that both terms are upper-bounded by
	$$3n^\gamma\cdot\paren{\numcallsdist\cdot e^{\keylen-\frac{\delta^2}{2}\cdot\nu}+5\numcallsdist\delta}.$$
	We conclude that 
	$$\Delta\leq6n^\gamma\cdot\paren{\numcallsdist\cdot e^{\keylen-\frac{\delta^2}{2}\cdot\nu}+5\numcallsdist\delta}$$
	for infinitely many $n$'s.
	
	We next choose the parameters such that $\Delta\leq3/4$ for infinitely many $n$'s. Specifically, we choose $\delta$ and $\nu$ (which we later translate to choosing $\numinputquery_t$ for every $t\in[c]$). Let $\delta=1/{48\numcallsdist n^\gamma\sqrt[4]\nu}$ and let $\nu=2(\keylen+n)\cdot\numcallsdist^2 n^{2\gamma}$. Then $\delta\geq2/\nu$ and $\nu\geq2(\keylen+n)/\delta^2$ if $\nu$ is sufficiently large. In this case, $e^{\keylen-\frac{\delta^2}{2}\cdot\nu}\leq e^{-n}$. Since $\nu$ is a polynomial of $n$, $\nu\leq 2\numcallsdist e^{n}$ for all sufficiently large $n$'s. Therefore $\delta\geq2/\nu\geq e^{-n}/\numcallsdist$, hence $e^{-n}\leq \numcallsdist\delta$. Thus, for such choices of $\nu$, $\Delta\leq 36\numcallsdist n^\gamma\cdot\delta\leq3/4$.
	
	Now, recall that $\nu=\min_{k\in\keyDom}|\lastset_k|$, and that for every $k\in\keyDom$, either $|\lastset_k|=\numinputquery_c$ or $|\lastset_k|=\frac{\numinputquery_{t-1}}{(c-t)\numinputquery_t}$ for some $t\in[c]$. Further recall that for $\adv_\ideal$ to be have $1-e^{-n}$ advantage, we took $\numinputquery_c\geq 10(\keylen+c(n+\stretch)+n)$ and $\numinputquery_{t-1}\geq (c-t+1)\numinputquery_t\cdot 10(\keylen+c(n+\stretch)+n)$ for all $t\in[c]$. Therefore, we let\footnote{The bound on $\numinputquery_c$ can be improved to $10(\keylen+c(n+\stretch)+n)$. This is because the other term comes from considering $k$'s such that $\recSteps(k)<c$ (recall that if $\recSteps(k)=c$ then both scores are identical).}
	$$\numinputquery_c=10(\keylen+c(n+\stretch)+n)+2(\keylen+n)\cdot\numcallsdist^2 n^{2\gamma}$$
	and
	$$\numinputquery_{t-1}=(c-t+1)\paren{10(\keylen+c(n+\stretch)+n)+2(\keylen+n)\cdot\numcallsdist^2 n^{2\gamma}}\cdot \numinputquery_t$$
	for every $t\in[c]$. {As a result, the adversaries query the function on $\numinputquery=\numinputquery_0\leq n^{\alpha c}\cdot c!$ inputs, for some constant $\alpha\in\NN$.\footnote{$\alpha$ depends on the degree of the polynomials $\keylen$, $\stretch$, $\numcallsdist$, and $\gamma$.}. Since there are $2^{\inlen}$ possible inputs, for the adversaries to be well-defined, it must be the case where $n^{\alpha c}\cdot c!\leq 2^{\inlen}$. Indeed, since $c=\frac{\inlen}{w(n)\cdot \log\inlen}$ for some $w(n)=\omega(1)$,
	$$n^{\alpha c}\cdot c!\leq n^{\alpha c}\cdot c^c=2^{\alpha c\log n+c\log c}\leq2^{\alpha \frac{\inlen}{w(n)\log \inlen}\log n+\frac{\inlen}{w(n)\log \inlen}\log\inlen}\leq 2^{\inlen},$$
	for all sufficiently large $n$'s, where the last inequality is due to the fact that $\inlen=\poly(n)$, hence $\log n=O(\log\inlen)$.}
	
	Finally, in order to contradict \cref{lem:RO_is_PRG}, we show that the total number of queries $\tilR$ makes to $G$ is at most $2^{n/4}$. Indeed, in each of the $\numcallsdist$ calls to the distinguishing oracle, the number of calls to $G$ that the reduction makes is at most 
	$$\poly(m)\leq n^{\beta c}\cdot (c!)^\beta\leq n^{\beta c}\cdot c^{\beta c}=2^{\beta c\log n+\beta c\log c}$$
	for some constant $\beta\in\NN$.
	Recall that $c=\frac{n}{w(n)\cdot \log n}$, where $w(n)=\omega(1)$. Therefore, the total number of calls that $\tilR$ makes is at most
	\begin{align*}
		d\cdot2^{\beta c\log n+\beta c\log c}
		=d\cdot2^{\frac{\beta n}{w(n)}+\frac{\beta n}{w(n)\log n}\log\frac{n}{w(n)\log n}}
		\leq d\cdot2^{\frac{2\beta n}{w(n)}}
        \leq 2^{n/4},
	\end{align*}
	for sufficiently large $n$'s.
\end{proof}

\begin{remark}[On the definition of black-box reduction]
	In the definition we considered, the advantage of the reduction $R$ is a polynomial of the advantage of the distinguishing oracle $\adv$. This in particular allowed us to conclude the existence of $\gamma\in\NN$ for which \cref{eq:R_dist} holds. In the definition for black box constructions of PRF from \citet{BMM24}, it is only required that if $\adv$ succeeds with noticeable probability, then so does $R$. Under this definition, \cref{eq:R_dist} does not necessarily hold, since the advantage of $R$ might decrease as a function of the advantage of $\adv$.
	
	Nevertheless, to the best of our knowledge, all black-box reductions studied in the literature satisfy the stronger requirement. Moreover, our proof continues to hold under the weaker assumption that the advantage of $R$ is bounded below for all distinguishing oracles with a sufficiently large advantage.
\end{remark}
\section{Lower Bounds for Black-Box Constructions of a PRF With a Long Output}\label{sec:LB_PRF}
In this section, we show our lower bounds for fully black-box constructions of PRF with a long output from a PRG, with adaptive calls.
We will use the following result, stating the existence of a PRG \wrt some oracle $f$.
\begin{lemma}[{\cite[Lemma 3.5]{BMM24}}]\label{lem:PRG_wrt_oracle}
	For every oracle $f=\set{f_n:\zon\to\zon}_{n\in\NN}$ and $w,r:\NN\to\NN$ such that $n+r(n)=2^{o(w(n))}$, there exists a (possibly inefficient) function $G=\set{G_n:\zo^{w(n)}\to\zo^{n+r(n)}}_{n\in\NN}$ such that the following holds. For every \ppt oracle-aided algorithm $A$, there exists a negligible function $\mu$ such that
	$$\abs{\ppr{s\from\zo^{w(n)}}{A^{f,G}\of{1^n,G\of{1^n,s}}=1}-\ppr{y\from\zo^{n+r(n)}}{A^{f,G}\of{1^n,y}=1}}\leq\mu(n).$$
\end{lemma}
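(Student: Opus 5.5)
The plan is a probabilistic-method argument. I will choose each $G_n:\zo^{w(n)}\to\zo^{n+r(n)}$ independently and uniformly at random, show that a single fixed \ppt $A$ fails against a random $G$ with probability $1$, and then take a countable union over all \ppt machines. The oracle $f$ is fixed in advance and is independent of $G$, so it can be treated as part of $A$'s description: it is simply extra (possibly inefficient) information that carries no correlation with $G$.

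\textbf{Step 1: expected advantage for a fixed $n$.} Fix a \ppt oracle-aided $A$ making at most $q(n)=\poly(n)$ queries, and fix its coins. I will repeat the lazy-sampling argument from the proof of \cref{lem:RO_is_PRG}. In one procedure the challenge is $G_n(s)$ and in the other it is a fresh uniform $y$, while all other answers of $G$ (on every length) are lazily sampled. The two procedures are identical until $A$ queries $G_n$ on the hidden seed $s$. This gives
$$\eex{G}{\abs{\ppr{s}{A^{f,G}(1^n,G_n(s))=1}-\ppr{y}{A^{f,G}(1^n,y)=1}}}\leq q(n)\cdot 2^{-w(n)}.$$
Here I use that $n+r(n)=2^{o(w(n))}$ forces $w(n)=\omega(\log n)$, so this bound is negligible. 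The hypothesis is needed only to guarantee that $w$ is superlogarithmic; long outputs cost nothing, because each query reveals nothing about $s$ unless the query equals $s$.

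\textbf{Step 2: from expectation to almost-sure negligibility.} By Markov's inequality,
$$\ppr{G}{\text{advantage at }n > 2^{-w(n)/2}}\leq q(n)2^{-w(n)/2}.$$
Since $w(n)=\omega(\log n)$, this is at most $n^{-2}$ for all large $n$, so the probabilities are summable. By Borel--Cantelli, with probability $1$ over $G$ only finitely many $n$ are bad, and hence the advantage of $A$ is bounded by the negligible function $2^{-w(n)/2}$ for all sufficiently large $n$. This yields a negligible $\mu$ for this $A$.

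\textbf{Step 3: quantifier swap.} There are only countably many \ppt oracle machines. The set of $G$ that are bad for some machine is therefore a countable union of measure-zero sets, so it has measure zero. In particular some fixed $G$ is good for every $A$ simultaneously, which is exactly the statement.

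The step I expect to be the main obstacle is this passage from ``for every $A$, a random $G$ is good'' to a single $G$ good for all $A$, together with the requirement that the guarantee hold for all large $n$ rather than only on average over $n$. The summability in Step 2 is precisely what makes Borel--Cantelli apply, and that is where the growth condition on $w$ is used. A minor technicality is that $A$ may query $G$ on lengths other than $w(n)$. This is harmless, because the $G_{n'}$ are sampled independently and only queries hitting $s$ at length index $n$ affect the hybrid.
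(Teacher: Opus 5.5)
\textbf{Gap in Step~1.} Your outer structure is the right one: random $G$, a per-$n$ failure bound for a fixed $A$, Borel--Cantelli, then a countable union over machines, with $f$ treated as a fixed oracle independent of $G$. Note that the paper does not prove this lemma itself; it imports it from BMM24. The problem is in Step~1.

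Write $a_G=\Pr_s[A^{f,G}(1^n,G_n(s))=1]-\Pr_y[A^{f,G}(1^n,y)=1]$. The identical-until-bad argument samples $G$ jointly with the challenge, so what it bounds is $\abs{\mathrm{E}_G[a_G]}\le q2^{-w}$. Your Markov step needs a bound on $\mathrm{E}_G\abs{a_G}$ instead, and going from the first to the second is Jensen in the wrong direction. The step you borrow from the proof of \cref{lem:RO_is_PRG} has the same slip.

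The bound you claim is in fact false. Let $A$ query $G_n(0^{w})$ and output $1$ iff the first bit of its challenge equals the first bit of $G_n(0^{w})$. Then:
\begin{itemize}
\item $\Pr_y[\cdot]=1/2$ exactly.
\item $\Pr_s[\cdot]=(1+B)/2^{w}$ with $B\sim\mathrm{Bin}(2^{w}-1,1/2)$ over the choice of $G$.
\item Hence $a_G$ has standard deviation about $2^{-w/2}/2$, and $\mathrm{E}_G\abs{a_G}=\Theta(2^{-w/2})$, not $O(2^{-w})$.
\end{itemize}
Worse, $\Pr_G[\abs{a_G}>2^{-w/2}]$ tends to a positive constant (a two-standard-deviation Gaussian tail). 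These events depend on distinct, independent $G_n$'s. By the second Borel--Cantelli lemma, this one-query adversary almost surely exceeds your threshold $2^{-w(n)/2}$ for infinitely many $n$. So the conclusion of Step~2 is false, not merely unproven.

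\textbf{What is missing and how to repair it.} You need a genuine concentration bound over $G$ for a fixed $A$; a second-moment bound suffices. Let $b_G(y)$ denote the coin-averaged acceptance probability of $A^{f,G}(1^n,y)$, so that $a_G=\mathrm{E}_s[b_G(G_n(s))]-\mathrm{E}_y[b_G(y)]$.

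Expanding $a_G^2$ yields three terms. After averaging over $G$, each equals the acceptance probability of a $2q$-query algorithm that runs $A$ twice against the same $G$, with independent coins, on a pair of challenges, and outputs the AND of the two answers. The three challenge pairs are $(G_n(s),G_n(s'))$, $(G_n(s),y')$ and $(y,y')$.

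Apply lazy sampling to this doubled algorithm, with bad event ``$s=s'$, or some query hits $s$ or $s'$''. This gives $\mathrm{E}_G[a_G^2]\le(8q+1)2^{-w}$. Chebyshev then gives $\Pr_G[\abs{a_G}>2^{-w/4}]\le(8q+1)2^{-w/2}$, which is summable since $w=\omega(\log n)$.

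The bound stays summable even if running time is measured in the full input length, so that $q=\poly(n+r(n))=2^{o(w)}$. This is presumably why the hypothesis reads $n+r(n)=2^{o(w(n))}$ rather than just $w=\omega(\log n)$.

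With the threshold raised from $2^{-w(n)/2}$ to $2^{-w(n)/4}$, your Steps~2 and~3 go through as written.
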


In what follows, we refer to the function $G$ as the PRG \wrt the oracle $f$.
We are now ready to state and prove our lower bound on the number of calls to the PRG required to construct a PRF with a long output. Throughout the section we fix the input length $\inlen=\inlen(n)$, the output length $\outlen=\outlen(n)$, the key length $\keylen=\keylen(n)$, and the PRG stretch $\stretch=\stretch(n)$.

\begin{theorem}\label{thm:PRF_from_PRG_gen}
	For every $c(n)=\frac{\outlen}{\stretch+\omega(\log n)}$, there is no $c(n)$-call fully black-box construction of a weak PRF with output length $\outlen$ from an $\stretch$-bit stretch PRG.
\end{theorem}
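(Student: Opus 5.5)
We follow the entropy approach sketched in the overview, using \cref{lem:PRG_wrt_oracle} to obtain a PRG whose output carries little entropy. Suppose towards contradiction that $(\prf,R)$ is a $c$-call fully black-box construction with $c=\frac{\outlen}{\stretch+w}$ for some $w=\omega(\log n)$. We instantiate the PRG as $G_n(s)=G'(s_{[1..w']})\conc 0^{n-w'}$-like structure. Concretely, we let $w'=w'(n)$ with $w'=\omega(\log n)$ and $c(w'+\stretch)<\outlen$; since $c(\stretch+w)=\outlen$, taking $w'=w/2$ suffices for large $n$. Define $G(s)=H(s_{1..w'})\conc s_{w'+1..n}$, where $H:\zo^{w'}\to\zo^{w'+\stretch}$ is the PRG given by \cref{lem:PRG_wrt_oracle}, relative to an oracle $\oracle$ that we specify below. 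The condition $n+r=2^{o(w')}$ holds because $w'=\omega(\log n)$ and the relevant lengths are polynomial. Then $G$ is a PRG relative to $\oracle$: given a challenge, one splits off the suffix, and a distinguisher for $G$ yields one for $H$.

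The key observation is that, for a fixed input $x$ and key $k$, the adaptive evaluation $f^G_k(x)$ is determined by $x$, $k$, and the $c$ pairs $(G(s_j)\text{'s } H\text{-part})$. The suffixes $s_{w'+1..n}$ are determined by $x,k$ and previous answers, so conditioning iteratively shows the output depends only on $k$ and at most $c(w'+\stretch)$ ``fresh'' bits per input. More precisely, for fixed $k$, the map $x\mapsto f^G_k(x)$ has image determined by the tuple of values $H(\cdot)$ on prefixes actually queried. We therefore define an inefficient oracle $\oracle$ (breaker) that, on input $m=\keylen+n$ pairs $(x_i,z_i)$, brute-forces over all keys $k$ and over all assignments of the $H$-answers. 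Equivalently, it checks whether some $k$ and some ``transcript'' of $c$ answers of length $w'+\stretch$ per input, each consistent with the adaptive structure (computed via $S_i,P$ with $G$-answers of the form $y\conc$suffix), explain all $z_i$. The number of explanations is at most $2^{\keylen+m c(w'+\stretch)}\le 2^{\keylen+m(\outlen-1)}$. Hence for a random $f$ the acceptance probability is at most $2^{\keylen-m}=2^{-n}$, while for $f^G_k$ it is $1$. The weak-PRF setting is fine: the $x_i$ are i.i.d.\ uniform, and the counting is over $z_i$ for any fixed inputs.

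Since $\oracle$ makes no queries to $G$ or $H$ (it searches over all possible answer strings rather than using $H$), it is a fixed oracle, so \cref{lem:PRG_wrt_oracle} applies to $H$ relative to $\oracle$. The adversary $\adv^f$ queries $f$ on $m$ uniform points and forwards the pairs to $\oracle$; its advantage is at least $1-2^{-n}$. Thus $R^{\adv,G}$ distinguishes $G$ with noticeable advantage. Simulating $\adv$ with one $\oracle$ call and $G$ via $H$ then gives a \ppt $A^{\oracle,H}$ breaking $H$, contradicting \cref{lem:PRG_wrt_oracle}. The subtle step is making $\oracle$ independent of $H$ while still achieving the entropy bound, i.e., correctly bounding the number of consistent adaptive transcripts. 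I expect this to be the main obstacle, including handling that the $G$ output includes the copied suffix, which is fixed by the queried seed. Also $w'$ must satisfy both $w'=\omega(\log n)$ and $c(w'+\stretch)\le\outlen-1$, which the hypothesis $c=\outlen/(\stretch+\omega(\log n))$ allows.
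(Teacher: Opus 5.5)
Your proposal is correct and follows essentially the paper's proof: you apply the PRG $H$ from \cref{lem:PRG_wrt_oracle} only to a short prefix of the seed and copy the suffix, you use a brute-force breaker that enumerates keys and the short $H$-outputs along each adaptive evaluation (so it never queries $H$), you bound its acceptance on a random function by $2^{\keylen-m}$, and you turn $R$ into a breaker for $H$. The paper differs only in parameters (prefix length $w-1$ and $m=(n+\keylen)/c$, giving $2^{\keylen-mc}$), and your packaging of the breaker as an explicit $H$-independent oracle is slightly cleaner; like the paper's footnote, though, your weak-PRF remark needs the birthday bound to exclude colliding i.i.d.\ inputs, because the per-explanation probability $2^{-m\outlen}$ fails when inputs repeat.
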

\begin{proof}
	Fix $w=w(n)=\omega(\log n)$ and let $\numquery=\numquery(n)=\outlen/(\stretch+w)$. Assume towards contradiction that there exists a $\numquery$-call fully black-box construction $(\prf,R)$ from an $\stretch$-bit stretch PRG, where
	$$\prf=\set{f^{(\cdot)}_k:\zo^\inlen\to\zo^\outlen}_{n\in\NN,k\in\zo^\keylen}.$$
	That is, for every $i\in[\numquery]$ there exist $S_i:\zo^{\inlen}\times\zo^{\keylen}\times\sparen{\zo^{(n+\stretch)}}^{i-1}\to\zon$, and there exists $P:\zom\times\zo^{\keylen}\times\paren{\zo^{n+\stretch}}^\numquery\to\zo^\outlen$ such that
	$$f^G_k(x)=P\of{x,k,G\of{s_1},\ldots,G(s_\numquery)}$$
	for all $G:\zon\to\zo^{n+\stretch}$, where $s_i=S_i(x,k,G\of{s_1},\ldots,G(s_{i-1}))$ for all $i\in[\numquery]$.
	
	To obtain a contradiction, we consider a PRG that is applied only to the first $w$ bits of the seed. This, intuitively, implies that the entropy of the output of the PRF over $\numinputquery$ inputs is provided only by the key (which has $\keylen$ entropy) and $\numinputquery\cdot (w+\stretch)\cdot\numquery$ additional bits from all applications of the $\stretch$-bit stretch PRG. On the other hand, for a random function, the amount of entropy is $\numinputquery\cdot\outlen$. Taking $\numinputquery=(n+\keylen)/\numquery$ results in a large gap between the two cases.

    In the following, for a string $s=(s_1,\ldots,s_n)\in\zon$ and $i\in[n]$ we let $s_{< i}=(s_1,\ldots,s_{i-1})$ and let $s_{\geq i}=(s_{i},\ldots,s_n)$. We use $\conc$ to denote the concatenation of strings. 
	
	We first define a distinguisher $\Break^{(\cdot)}$ for the PRF.
	
	\begin{algorithm}[$\Break$]~\\
		{Input:} The security parameter $1^n$.\\
		{Oracle access:} A function $f:\zo^{\inlen}\to\zo^{\outlen}$.
		
		\begin{enumerate}[topsep=0pt]
			\item Let $\numinputquery=\numinputquery(n)=(n+\keylen)/\numquery$ and let $x_1,\ldots,x_\numinputquery\in\zo^\inlen$ denote the lexicographically first $\numinputquery$ inputs.\footnote{Note that if we obtain a contradiction for all distinct inputs, then the same holds for random inputs, up to the birthday bound.}
			
			\item Query $f$ on $x_1,\ldots,x_\numinputquery$, and let $z_1,\ldots,z_\numinputquery\in\zo^{\outlen}$ denote the respective answers.
			
			\item Output 1 if and only if there exists $k\in\zo^{\keylen}$ and $(y_{i,j})_{i\in[\numinputquery],j\in[\numquery]}\in\sparen{\zo^{w+\stretch-1}}^{\numinputquery\cdot\numquery}$ such that
			$$P\paren{x_i,k,\paren{y_{i,1}\conc s_{i,1}},\ldots,\paren{y_{i,\numquery}\conc s_{i,\numquery}}}=z_i$$
			for all $i\in[\numinputquery]$, where
			$$s_{i,j}=S_j\of{x_i,k,\paren{y_{i,1}\conc s_{i,1}},\ldots,\paren{y_{i,j-1}\conc s_{i,j-1}}}_{\geq\frac{\outlen}{\numquery}-\stretch}$$
			For all $i\in[\numinputquery]$ and $j\in[\numquery]$.
		\end{enumerate}
	\end{algorithm}
	
	We next define a PRG for which $\Break$ breaks the security of $\prf$. Since $w=\omega(\log n)$ and $\stretch$ is a polynomial, it follows that $n+\stretch=2^{o(w-1)}$. Thus, by \cref{lem:PRG_wrt_oracle} there exists a PRG $G':\zo^{w-1}\to\zo^{w+\stretch-1}$ \wrt the oracle $\Break$, that cannot be broken with non-negligible probability. Consider the PRG $G:\zon\to\zo^{n+\stretch}$ defined as
	$$G(s)=G'\of{s_{<w}}\conc s_{\geq w}.$$
	Observe that $\Break^{f^G_k}$ always outputs 1. To reach a contradiction, we next show that for a random function $f$, this probability is at most $2^{-n}$. Indeed, by the union bound,
	\ifdefined\IsLLNCS
	\begin{align*}
		&\ppr{f\from\funcSpace_{\inlen}}{\Break^{f}(1^n)=1}\\
		&\quad=\Pr_{f\from\funcSpace_{\inlen}}\left[
		\begin{aligned}
			&\exists k\in\zo^{\keylen}\;\exists \paren{y_{i,j}}_{i\in[\numinputquery],j\in[\numquery]}\in\paren{\zo^{w+\stretch-1}}^{\numinputquery\cdot\numquery}\;\forall i\in[\numinputquery]:\\
			&\quad P\paren{x_i,k,\paren{y_{i,j}\conc s_{i,j}}_{j=1}^{\numquery}}=f(x_i)
		\end{aligned}
		\right]\\
		&\quad\leq2^{\keylen+\numinputquery\cdot\numquery\cdot\paren{w+\stretch-1}}\cdot\ppr{f\from\funcSpace_{\inlen}}{\forall i\in[\numinputquery]: P\paren{x_i,k,\paren{y_{i,j}\conc s_{i,j}}_{j=1}^{\numquery}}=f(x_i)}\\
		&\quad=\frac{2^{\keylen+\numinputquery\cdot\outlen-\numinputquery\cdot\numquery}}{2^{\numinputquery\cdot\outlen}}\\
		&\quad=2^{-n}.
	\end{align*}
	\else
	\begin{align*}
		&\ppr{f\from\funcSpace_{\inlen,\outlen}}{\Break^{f}(1^n)=1}\\
		&\quad=\ppr{f\from\funcSpace_{\inlen,\outlen}}{\exists k\in\zo^{\keylen}\;\exists \paren{y_{i,j}}_{i\in[\numinputquery],j\in[\numquery]}\in\paren{\zo^{w+\stretch}}^{\numinputquery\cdot\numquery}\;\forall i\in[\numinputquery]: P\paren{x_i,k,\paren{y_{i,j}\conc s_{i,j}}_{j=1}^{\numquery}}=f(x_i)}\\
		&\quad\leq2^{\keylen+\numinputquery\cdot\numquery\cdot\paren{w+\stretch-1}}\cdot\ppr{f\from\funcSpace_{\inlen}}{\forall i\in[\numinputquery]: P\paren{x_i,k,\paren{y_{i,j}\conc s_{i,j}}_{j=1}^{\numquery}}=f(x_i)}\\
		&\quad=\frac{2^{\keylen+\numinputquery\cdot\outlen-\numinputquery\cdot\numquery}}{2^{\numinputquery\cdot\outlen}}\\
		&\quad=2^{-n}.
	\end{align*}
	\fi
    Since $(\prf,R)$ is assumed to be a fully black-box PRF construction, it follows that $R^{\adv,G}$ breaks $G$. However, this implies the existence of a \ppt algorithm $\tilR^{\adv,G'}$ breaking $G'$, contradicting \cref{lem:PRG_wrt_oracle}.
\end{proof}

\bibliographystyle{abbrvnat}
\bibliography{crypto}

\end{document}